\newcommand\ket[1]{\ensuremath{|#1\rangle}}
\newcommand\bra[1]{\ensuremath{\langle#1|}}
\newcommand\ip[2]{\ensuremath{\langle#1,#2\rangle}}
\newcommand{\range}{\mathop{\rm range}\nolimits}
\newcommand{\Span}{\mathop{\rm span}\nolimits}
\def\D{\mathcal{D}}
\def\P{\mathcal{P}}
\newtheorem{definition}{Definition}
\newtheorem{lemma}{Lemma}
\newtheorem{theorem}{Theorem}
\newtheorem{corollary}{Corollary}
\newtheorem{proposition}{Proposition}
\newtheorem{example}{Example}
\begin{document}

\title[Ground-State Spaces of Frustration-Free Hamiltonians]{Ground-State Spaces of Frustration-Free Hamiltonians}

\author{Jianxin Chen}
\affiliation{Department of Mathematics \& Statistics, University of
Guelph, Guelph, Ontario, Canada}
\affiliation{Institute for Quantum
Computing, University of Waterloo, Waterloo, Ontario, Canada}

\author{Zhengfeng Ji}
\affiliation{Institute for Quantum Computing, University of
Waterloo, Waterloo, Ontario, Canada}
\affiliation{State Key
Laboratory of Computer Science, Institute of Software, Chinese
Academy of Sciences, Beijing, China}

\author{David Kribs}
\affiliation{Department of Mathematics \& Statistics, University of
Guelph, Guelph, Ontario, Canada} \affiliation{Institute for Quantum
Computing, University of Waterloo, Waterloo, Ontario, Canada}

\author{Zhaohui Wei}
\email{cqtwz@nus.edu.sg} \affiliation{Centre for Quantum
Technologies, National University of Singapore, Singapore}

\author{Bei Zeng}
\affiliation{Department of Mathematics \& Statistics, University of
Guelph, Guelph, Ontario, Canada} \affiliation{Institute for Quantum
Computing, University of Waterloo, Waterloo, Ontario, Canada}

\begin{abstract}
We study the ground-state space properties for frustration-free Hamiltonians. We introduce a
concept of `reduced spaces' to characterize local structures of ground-state spaces.
For a many-body system, we characterize mathematical structures for the set $\Theta_k$ of all the $k$-particle reduced spaces, which with a binary operation called join forms a semilattice that can be interpreted as an abstract convex structure. The smallest nonzero elements in $\Theta_k$, called atoms, are analogs of extreme points.
We study the properties of atoms in $\Theta_k$ and discuss its relationship with ground states of $k$-local frustration-free Hamiltonians. For spin-$1/2$ systems, we show that all the atoms in $\Theta_2$ are unique ground states of some $2$-local frustration-free Hamiltonians. Moreover, we show that the elements in $\Theta_k$ may not be the join of atoms, indicating a richer structure for  $\Theta_k$ beyond the convex structure. Our study of $\Theta_k$ deepens the understanding of ground-state space properties for frustration-free Hamiltonians, from a new angle of reduced spaces.
\end{abstract}

\pacs{03.65.Ud, 03.67.Mn, 89.70.Cf}


\maketitle

\section{Introduction}
\label{sec:intro}

Quantum many-body physics is no doubt one of the most exciting areas
in modern physics. The interaction, correlation or entanglement
between particles result in intriguing physical systems such as
superconductors and topological insulators, which are materials with
dramatically different physical properties from the traditional
matters such as conductors and insulators~\cite{HK10}. Given a
quantum many-body system of $n$ particles, described by a
Hamiltonian $H$, the first basic feature to understand is its ground
state property, or, its ground-state space property when the ground
states are degenerate. In practical systems, the Hamiltonian $H$ can
be written as $H=\sum_j H_j$, where each term $H_j$ acts on at most
$k$-particles. This kind of Hamiltonian is called a $k$-local
Hamiltonian~\cite{Has10}. In many physical systems, one has $k=2$ as
the $H_j$'s involve at most two-particle interactions.

The ground state energy of a $k$-local Hamiltonian, given by
$E_0=\bra{\psi_0}H\ket{\psi_0}$, where $\ket{\psi_0}$ is a ground
state, can just be given by the $k$-particle reduced density
matrices ($k$-RDMs) of $\ket{\psi_0}$. For a many-body system, the
set of all $k$-RDMs, denoted by $\mathcal{D}_k$, is a closed convex
set. The knowledge of $\mathcal{D}_k$ can replace the many-body wave
functions by their $k$-RDMs, in calculating physical observables,
such as the ground state energy. Therefore, it is highly desired to
find a full characterization for the structure of $\mathcal{D}_k$,
in particular for $k=2$. Unfortunately, this is shown to be hard
even with the existence of a quantum computer~\cite{Liu06}. However,
a  better understanding of the geometry of $\mathcal{D}_k$ can
provide more practical information for calculations over
$\mathcal{D}_k$.

An important type of local Hamiltonians widely studied are called
`frustration-free' Hamiltonians, where for $H=\sum_j H_j$, the
ground-state space of $H$ is also in the ground-state space of each
$H_j$~\cite{PVC+07,BT08}. This `frustration-free' property, though
at first glance might seem very restricted, is actually complicated
enough to give rise to many interesting physical phenomena, such as
the fractional quantum Hall effect and topological phase of
matter~\cite{KL09}. Studies in quantum information science show that
in general, to find out whether a $k$-local Hamiltonian is
frustration-free is very hard, even with the existence of a quantum
computer~\cite{Bra06}. It turns out that the only frustration-free
system that is relatively easy to understand is the case $k=2$ for
qubit systems, where the ground state energy can be obtained with a
polynomial algorithm and the ground-state space structure can be
completely characterized by a tree tensor network
structure~\cite{Bra06,CCD10,JWZ10}.

Note that for any ground state $\ket{\psi_0}$ of a $k$-local
frustration-free (FF) Hamiltonian $H$, any other state with the same
ranges of the $k$-RDMs as those of $\ket{\psi_0}$ is also a ground
state of $H$. In other words, unlike the case for general $k$-local
Hamiltonians, where full information of the $k$-RDMs is needed to
determine the ground-state space properties, for the case of
$k$-local FF Hamiltonians, only the information of the ranges is
sufficient. Here the range of a density operator $\rho$ is the space
spanned by all the eigenvectors of $\rho$ corresponding to nonzero
eigenvalues. Given that the range of a density operator is a
subspace, we name the ranges of $k$-RDMs by `$k$-particle reduced
spaces' ($k$-RSs), and denote the set of all the $k$-RSs by
$\Theta_k$. It is then natural to ask what is the mathematical
structure that can characterize $\Theta_k$. Similar to the general
case with frustration, this mathematical structure will provide
useful information for studying the ground-state space properties
for $k$-local FF Hamiltonians.

Recall that the set of all the subspaces of the Hilbert space is a lattice, whose structure has been widely studied in the field of quantum logic~\cite{Kal83}. In this paper we show that $\Theta_k$ is a semilattice, and interestingly, this semilattice structure can be viewed as an abstract convex structure~\cite{Fri09}. This abstract convex structure of $\Theta_k$ can be then viewed as an analog of the convex set $\mathcal{D}_k$. The smallest nonzero elements in $\Theta_k$, called atoms, are analogs of the extreme points in $\mathcal{D}_k$. We study the properties of atoms in $\Theta_k$ and discuss the relationship with ground states of $k$-local FF Hamiltonians. For a spin-$1/2$ system, we show that all the atoms in $\Theta_2$ are unique ground states of $2$-local FF Hamiltonians.  Moreover, contrary to the points in $\mathcal{D}_k$ which can always be weighted sums of extreme points, we show that elements in $\Theta_k$ may not be the join of atoms, indicating a richer structure of $\Theta_k$. We believe our study of $\Theta_k$ deepens the understanding of ground-state space properties for frustration-free Hamiltonians.

We organize our paper as follows. In Sec.~\ref{sec:Dk} we recall
some background information regarding the convex structure of
$\mathcal{D}_k$ and the relationship between the geometry of
$\mathcal{D}_k$ and the ground-state spaces of $k$-local
Hamiltonians. In Sec.~\ref{sec:kRS}, we give the formal definition
of $\Theta_k$. In Sec.~\ref{sec:semigroup}, we introduce a binary
operation of the elements, under which $\Theta_k$ is closed and
forms a semigroup. In Sec.~\ref{sec:semilattice}, we then show that
$\Theta_k$ equipped with this binary operation is a semilattice. We
further study in detail this semilattice structure, with especial
focus on its smallest elements, called atoms, and its building
blocks called join irreducibles.  In Sec.~\ref{sec:convex}, we show
that this semilattice can be viewed as an abstract convex structure,
with atoms being analogs of extreme points.  In Sec.~\ref{sec:GSS}
we relate the structure of $\Theta_k$ to the ground-state spaces of
$k$-local FF Hamiltonians. Finally, a conclusion and discussion is
given in Sec.~\ref{sec:con}.

\section{The Convex Set of Reduced Density Matrices}
\label{sec:Dk}

In this section, we recall some background materials regarding the convex structure of $\mathcal{D}_k$ and the relationship between the geometry of $\mathcal{D}_k$ and the ground-state spaces of $k$-local Hamiltonians.

Consider a $n$-particle system. Let $\D$ be the set of density
matrices of $n$-particle states. For $\rho\in\D$, let
$\vec{R}_{k}(\rho) =(\gamma_j) = ( \gamma_1, \gamma_2, \ldots,
\gamma_m)$ be the vector whose elements $\gamma_{j}$s are the
$k$-RDMs of $\rho$ in a fixed order, where $m={n\choose k}$. This
then gives a map
\begin{equation}
\mathcal{\vec{R}}_{k}: \rho\mapsto (\gamma_j),
\end{equation}
where $\rho$ is then a pre-image of $(\gamma_j)$ under
$\mathcal{\vec{R}}_{k}$. Note that the map $\mathcal{\vec{R}}_{k}$
is not one-to-one. That is, there may exist some other $\rho'\in\D$
such that $\vec{R}_{k}(\rho')$=$\vec{R}_{k}(\rho)$. Consequently,
the set of pre-images of $\vec{R}_{k}(\rho)$ contain $\rho$. Denote
$\D_k = \{\vec{R}_k(\rho) \,|\, \rho\in\D\}$ the set of all the
$k$-RDMs. Then it is straightforward to see that $\D_k$ is a closed
convex set.

To relate the geometry of $\D_k$ to ground-state spaces of $k$-local Hamiltonians $H=\sum_j H_j$,
we need the concepts of dual cone and face.
\begin{definition}
For a convex set $C$ and a set $E$, the dual cone of $C$ with
respect to $E$ is defined to be
\begin{equation}
  \P(C) = \left\{ y \,|\, \forall\, x\in C,\ y\in E, \ip{x}{y} \ge 0
  \right\},
\end{equation}
where $\ip{x}{y}$ is some kind of inner product.
\end{definition}
Let the dual cone of the closed convex set $\D_k$ be $\P_k$ where,
for $x=(\gamma_j)$ and $y=(H_j)$, $\ip{x}{y}$ is defined as $\sum_j
Tr(H_j\gamma_j)$. Any point $\vec{H}$ of the form $(H_j)$ defines a
$k$-local Hamiltonian $H = \sum_j H_j$. Moreover, we have
$\ip{\vec{R}_k(\rho)}{\vec{H}} = Tr(\rho H)$. This allows us to
visualize $k$-local Hamiltonians as hyperplanes in the space
containing $\D_k$.

We now recall the concept of face.
\begin{definition}~\cite{Roc96}
For any convex set $C$, a subset $F$ is called a {\em face} on $C$ if
\begin{enumerate}
\item $F$ is a convex set.
\item For any line segment $L\subseteq C$,
if $L$ intersects $F$ at some point other than the two end points of
$L$, then $L\subseteq F$.
\end{enumerate}
\end{definition}

As a $k$-local Hamiltonian $H$ is a hyperplane in the space
containing $\D_k$, the face $F$ of $\D_k$ that this hyperplane
touches then corresponds to the ground-state space of $H$. In other
words, the sum of the ranges of the pre-images of all the points in
$F$ under the map $\mathcal{\vec{R}}_{k}$ gives the ground-state
space of $H$. These kind of faces are called exposed. However, for
any given face $F$ of $\D_k$, there might not exist a hyperplane
which touches only $F$ but no other points of $\mathcal{D}_k$. In
other words, in general there may exist non-exposed faces.

Note that the extreme points of $\D_k$ are zero-dimensional faces.
Given an extreme point $P$, if there exists a hyperplane (i.e. a
$k$-local Hamiltonian $H$), which touches only the point $P$ but no
other points of $\mathcal{D}_k$, then $P$ is called an exposed
point. In this case, the pre-images of $P$ under the map
$\mathcal{\vec{R}}_k$ give the ground-state space of $H$. If the
pre-image of $P$ is unique (i.e. a pure state $\ket{\psi}$), then
$\ket{\psi}$ is a unique ground state of $H$. Similarly, in general
there might exist non-exposed extreme points.

\section{The $k$-particle Reduced Spaces}
\label{sec:kRS}

As discussed in the introduction, to study the ground-state
properties for $k$-local FF Hamiltonians, we do not need the full
information of $k$-RDMs. Indeed, only the ranges of the $k$-RDMs are
needed. In this section, we will introduce the concept of
$k$-particle reduced spaces to characterize these ranges.

Let $\Theta$ be the set of subspaces of the $n$-particle Hilbert space.
Consider an $n$-particle subspace $\mathcal{S}\in \Theta$, which is non-empty,
namely, the dimension of $\mathcal{S}$ is at least one. Define
\begin{equation}
F_{\mathcal{S}}=\{\rho|\range(\rho)\subseteq \mathcal{S}\},
\end{equation}
which is a set of all the $n$-particle density operators, whose
ranges are subspaces of $\mathcal{S}$.

Recall that for any $n$-particle state $\rho$, the $k$-RDMs of $\rho$ is given by $\vec{R}_{k}(\rho)=(\gamma_1,\ldots,\gamma_m)$.
For each $\gamma_j$, let
\begin{equation}
\eta_j=\sum_{\rho\in F_{\mathcal{S}}}\range(\gamma_j),
\end{equation}
where the sum is the usual sum of vector spaces. That is, $\eta_j$
is the sum of the ranges of the $j$-th element $\gamma_j$ of the
$k$-RDMs of all possible $\rho$'s, whose ranges are subspaces of
$\mathcal{S}$.
\begin{definition}
The $k$-particle reduced spaces ($k$-RSs) of $\mathcal{S}$, denoted by $\vec{L}_{k}(\mathcal{S})$, is given by
\begin{equation}
\vec{L}_{k}(\mathcal{S})=(\eta_j)=(\eta_1, \eta_2, \ldots,\eta_m),
\end{equation}
where $m={n\choose k}$.
\end{definition}
This then gives a map
\begin{equation}
\mathcal{\vec{L}}_k: \mathcal{S}\mapsto(\eta_j),
\end{equation}
where $\mathcal{S}$ is a pre-image of $(\eta_j)$.

We now show that the $k$-RSs of $\mathcal{S}$ is well defined in a sense that it can be characterized by the
maximally mixed state $\rho_M$ of $\mathcal{S}$, which completely specifies $\mathcal{S}$ as it is proportional
to the projection onto $\mathcal{S}$.

\begin{proposition}
Let $\vec{R}_k(\rho_M)=(\gamma_{M,1},\gamma_{M,2},\ldots,\gamma_{M,m})$, which are $k$-RDMs of $\rho_M$. Then
\begin{equation}
\vec{L}_k(\mathcal{S})=(\range(\gamma_{M,1}),\range(\gamma_{M,2}),\ldots,\range(\gamma_{M,m})).
\end{equation}
\end{proposition}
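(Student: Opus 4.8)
The plan is to show the two containments $\eta_j \subseteq \range(\gamma_{M,j})$ and $\range(\gamma_{M,j}) \subseteq \eta_j$ for each $j$, where $\rho_M$ is the maximally mixed state on $\mathcal{S}$. The second containment is immediate: since $\rho_M \in F_{\mathcal{S}}$ (indeed $\range(\rho_M) = \mathcal{S}$), the space $\range(\gamma_{M,j})$ is one of the summands in the definition $\eta_j = \sum_{\rho \in F_{\mathcal{S}}} \range(\gamma_j)$, hence $\range(\gamma_{M,j}) \subseteq \eta_j$.

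For the nontrivial containment $\eta_j \subseteq \range(\gamma_{M,j})$, I would fix an arbitrary $\rho \in F_{\mathcal{S}}$, so $\range(\rho) \subseteq \mathcal{S} = \range(\rho_M)$, and show $\range(\gamma_j) \subseteq \range(\gamma_{M,j})$ where $\gamma_j$ is the $j$-th $k$-RDM of $\rho$. The key observation is an operator-domination statement: because $\rho_M$ is proportional to the projector onto $\mathcal{S}$ and $\range(\rho) \subseteq \mathcal{S}$, there is a constant $c > 0$ with $\rho \le c\,\rho_M$ as positive operators (take $c$ larger than the reciprocal of the smallest nonzero eigenvalue of $\rho_M$ times the operator norm of $\rho$; equivalently, $c\rho_M - \rho \ge 0$ since $\rho_M$ acts as $c^{-1}\|\rho\| \cdot I$ when restricted to $\mathcal{S} \supseteq \range(\rho)$ and $\rho$ vanishes outside $\mathcal{S}$). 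Partial trace is a positive (indeed completely positive) linear map, so it preserves the operator order: $\gamma_j \le c\,\gamma_{M,j}$. For positive semidefinite operators, $A \le B$ implies $\range(A) \subseteq \range(B)$ (if $B v = 0$ then $0 \le \langle v, A v\rangle \le \langle v, Bv\rangle = 0$, so $Av = 0$ as well, whence $\ker B \subseteq \ker A$ and by taking orthogonal complements $\range(A) \subseteq \range(B)$). Therefore $\range(\gamma_j) \subseteq \range(\gamma_{M,j})$. Taking the union (vector-space sum) over all $\rho \in F_{\mathcal{S}}$ gives $\eta_j \subseteq \range(\gamma_{M,j})$.

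Combining the two containments yields $\eta_j = \range(\gamma_{M,j})$ for every $j$, which is exactly the claimed identity $\vec{L}_k(\mathcal{S}) = (\range(\gamma_{M,1}), \ldots, \range(\gamma_{M,m}))$. I expect the only point requiring care is the operator-domination bound $\rho \le c\,\rho_M$ and the fact that partial trace preserves positivity and hence the operator order; both are standard, so there is no substantial obstacle, but stating them cleanly is the crux of the argument. It is worth remarking that the same argument shows $\rho_M$ is not special — any $\rho \in F_{\mathcal{S}}$ with $\range(\rho) = \mathcal{S}$ (a full-rank state on $\mathcal{S}$) computes $\vec{L}_k(\mathcal{S})$ correctly — but $\rho_M$ is the canonical choice since it is uniquely determined by $\mathcal{S}$.
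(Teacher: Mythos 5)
Your proof is correct, and it reaches the conclusion by a different route than the paper. The paper works with the spectral decomposition $\rho=\sum_\alpha p_\alpha\oprod{\psi_\alpha}{\psi_\alpha}$ with each $\ket{\psi_\alpha}\in\mathcal{S}$, writes the partial trace explicitly as $\gamma_j=\sum_{\alpha,\beta}p_\alpha\iprod{\phi_\beta}{\psi_\alpha}\langle\psi_\alpha\ket{\phi_\beta}$ over an orthonormal basis $\{\ket{\phi_\beta}\}$ of the traced-out particles, and observes that $\range(\gamma_j)=\Span\{\iprod{\phi_\beta}{\psi_\alpha}\}$ is contained in $\range(\gamma_{M,j})$ because the $\ket{\psi_\alpha}$ lie in $\mathcal{S}$; the reverse containment (via $\rho_M\in F_{\mathcal{S}}$) is left implicit. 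You instead use the operator-order argument: $\rho\le c\,\rho_M$ since $\rho_M$ is proportional to the projector onto $\mathcal{S}\supseteq\range(\rho)$, partial trace preserves the order, and $A\le B$ for positive semidefinite operators forces $\range(A)\subseteq\range(B)$. Both arguments are sound and rest on the same underlying fact; yours is more abstract and avoids coordinates, and it yields the bonus observation (which the paper does not make explicit) that any full-rank state on $\mathcal{S}$, not only $\rho_M$, computes $\vec{L}_k(\mathcal{S})$. The paper's computation has the side benefit of exhibiting $\range(\gamma_j)$ concretely as $\Span\{\iprod{\phi_\beta}{\psi_\alpha}\}$, a description that is reused implicitly elsewhere in the examples.
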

\begin{proof}
Consider any $n$-particle state $\rho$ with
$\range(\rho)\subseteq{\mathcal{S}}$, and let $\vec{R}_k(\rho)=(\gamma_1,\gamma_2,\ldots,\gamma_m)$, which are $k$-RDMs of $\rho$. For each $\gamma_j$, the spectrum decomposition gives $\rho=\sum_k
p_{\alpha}\ket{\psi_{\alpha}}\bra{\psi_{\alpha}}$, with $\ket{\psi_{\alpha}}\in\mathcal{S}$. Let
$\ket{\phi_{\beta}}$ be any orthonormal basis of $n-k$ particles, then
$\gamma_j=\sum_{\alpha,\beta}p_{\alpha}\langle \phi_{\beta}|\psi_{\alpha}\rangle\langle\psi_{\alpha}|\phi_{\beta}\rangle$,
so $\range(\gamma_j)=\Span\{\langle \phi_{\beta}|\psi_{\alpha}\rangle\}$, which is apparently
a subspace of $\range(\gamma_{M,j})$.
\end{proof}

We now give some simple examples for $k$-RSs.
\begin{example}
We consider a three-qubit system with qubits $A,B,C$. For any subspace $\mathcal{S}$ of three qubits, denote $\eta_{1}$ ($\eta_{2}$; $\eta_{3}$) the $2$-RS of the qubits $A,B$ ($B,C$; $A,C$).  Namely, $\vec{L}_k(\mathcal{S})=(\eta_1,\eta_2,\eta_3)$. For the maximally mixed state $\rho_M$ of three qubits, denote $\gamma_{M,1}$ ($\gamma_{M,2}$; $\gamma_{M,3}$) the $2$-RDM of the qubits $A,B$ ($B,C$; $A,C$). We discuss three examples:
\begin{enumerate}
\item $\mathcal{S}$ is one dimensional which is a single state $|001\rangle$. Then $\rho_M=\ket{001}\bra{001}$, $\gamma_{M,1}=\ket{00}\bra{00}$, $\gamma_{M,2}=\gamma_{M,3}=\ket{01}\bra{01}$. Therefore,
\begin{equation}
\vec{L}_2(\mathcal{S})=(\Span\{\ket{00}\},\Span\{\ket{01}\},\Span\{\ket{01}\}).
\end{equation}
\item $\mathcal{S}$ is two dimensional which is spanned by $\{|000\rangle,\ket{111}\}$. Then $\rho_M=\frac{1}{2}(\ket{000}\bra{000}+\ket{111}\bra{111})$, $\gamma_{M,1}=\gamma_{M,2}=\gamma_{M,3}=\frac{1}{2}(\ket{00}\bra{00}+\ket{11}\bra{11})$. Therefore,
\begin{equation}
\vec{L}_2(\mathcal{S})=(\Span\{\ket{00},\ket{11}\},\Span\{\ket{00},\ket{11}\},\Span\{\ket{00},\ket{11}\}).
\end{equation}
\item $\mathcal{S}$ is one dimensional which is a single state $\frac{1}{\sqrt{3}}(|001\rangle+\ket{010}+\ket{100})$. Then $\rho_M=\frac{1}{3}(|001\rangle+\ket{010}+\ket{100})(\bra{001}+\bra{010}+\bra{100})$, $\gamma_{M,1}=\gamma_{M,2}=\gamma_{M,3}=\frac{1}{3}\ket{00}\bra{00}+\frac{2}{3}(\ket{01}+\ket{10})(\bra{01}+\bra{10})$. Therefore,
\begin{eqnarray}
&&\vec{L}_2(\mathcal{S})=(\Span\{\ket{00},\frac{1}{\sqrt{2}}(\ket{01}+\ket{10})\},\nonumber\\
&&\Span\{\ket{00},\frac{1}{\sqrt{2}}(\ket{01}+\ket{10})\},\Span\{\ket{00},\frac{1}{\sqrt{2}}(\ket{01}+\ket{10})\}).
\end{eqnarray}
\end{enumerate}
\end{example}

We denote the set of all the $k$-RSs by $\Theta_k =
\{\vec{L}_k(\mathcal{S}) \,|\, \mathcal{S}\in\Theta\}$. Note that
the map $\mathcal{\vec{L}}_k$, similar to $\mathcal{\vec{R}}_k$, is
not one-to-one. That is, there may exist some other
$\mathcal{S}'\in\Theta$ such that
$\vec{L}_{k}(\mathcal{S}')=\vec{L}_{k}(\mathcal{S})$. Consequently,
the set of pre-images of $\vec{L}_{k}(\mathcal{S})$ contains
$\mathcal{S}$. For the convenience of later discussion, for any
$\vec{L}_{k}$ we define a special pre-image as follows.
\begin{definition}
For any $\vec{L}_{k}=(\eta_1,\ldots,\eta_m)\in\Theta_k$, the MPI of
$\vec{L}_{k}$ is a pre-image of $\vec{L}_{k}$ under the map
$\mathcal{\vec{L}}_k$, which is given by $\bigcap_j \eta_j\otimes
\iota_{\bar{j}}$, where $\iota_{\bar{j}}$ is the Hilbert space of
the $n-k$ particles that  $\eta_j$ does not act on.
\end{definition}

Apparently, we have
\begin{lemma}
For any $\vec{L}_{k}=(\eta_1,\ldots,\eta_m)\in\Theta_k$, the MPI is
its maximal pre-image under the map $\mathcal{\vec{L}}_k$.
\end{lemma}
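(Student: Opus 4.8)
The plan is to verify the two properties that characterize a maximal pre-image: first, that the MPI, which I write as $M=\bigcap_j\eta_j\otimes\iota_{\bar j}$, is itself a pre-image of $\vec{L}_k=(\eta_1,\ldots,\eta_m)$ under $\mathcal{\vec{L}}_k$; and second, that every pre-image $\mathcal{S}'$ of $\vec{L}_k$ satisfies $\mathcal{S}'\subseteq M$. Both parts rest on one elementary observation, which I will call the \emph{tensor criterion}: for a unit vector $\ket{\psi}$, take the bipartition into the $k$ particles on which $\gamma_j$ acts and the complementary $n-k$ particles, and write the Schmidt decomposition $\ket{\psi}=\sum_i\sqrt{\lambda_i}\,\ket{a_i}\ket{b_i}$; then the $j$-th $k$-RDM is $\gamma_j(\oprod{\psi}{\psi})=\sum_i\lambda_i\oprod{a_i}{a_i}$, so $\range(\gamma_j(\oprod{\psi}{\psi}))\subseteq\eta_j$ if and only if every Schmidt vector $\ket{a_i}$ with $\lambda_i>0$ lies in $\eta_j$, which is exactly the condition $\ket{\psi}\in\eta_j\otimes\iota_{\bar j}$.

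For the second property, let $\mathcal{S}'$ be any pre-image, so $\eta_j(\mathcal{S}')=\eta_j$ for every $j$. Any $\ket{\psi}\in\mathcal{S}'$ gives $\oprod{\psi}{\psi}\in F_{\mathcal{S}'}$, hence $\range(\gamma_j(\oprod{\psi}{\psi}))\subseteq\eta_j(\mathcal{S}')=\eta_j$ directly from the definition of $\eta_j(\mathcal{S}')$ as the sum of such ranges. By the tensor criterion, $\ket{\psi}\in\eta_j\otimes\iota_{\bar j}$ for all $j$, so $\ket{\psi}\in M$; since $\ket{\psi}$ was arbitrary in $\mathcal{S}'$, this yields $\mathcal{S}'\subseteq M$. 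Applying this to the pre-image $\mathcal{S}$ furnished by the definition of $\Theta_k$ (some $\mathcal{S}\in\Theta$ with $\vec{L}_k(\mathcal{S})=(\eta_j)$) gives in particular $\mathcal{S}\subseteq M$, so $M$ is a genuine nonzero subspace, i.e. $M\in\Theta$.

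For the first property I show $\eta_j(M)=\eta_j$ by two inclusions. The inclusion $\eta_j\subseteq\eta_j(M)$ is the monotonicity of $\mathcal{\vec{L}}_k$ under subspace inclusion --- immediate from $F_{\mathcal{S}}\subseteq F_{M}$ in the defining sum --- applied to $\mathcal{S}\subseteq M$, together with $\eta_j=\eta_j(\mathcal{S})$. For the reverse inclusion, $M\subseteq\eta_j\otimes\iota_{\bar j}$ holds by definition; expanding the maximally mixed state $\rho_M$ of $M$ in an orthonormal basis $\{\ket{\psi_\alpha}\}$ of $M$ and applying the tensor criterion to each $\ket{\psi_\alpha}$ shows that every $\gamma_j(\oprod{\psi_\alpha}{\psi_\alpha})$ has range inside $\eta_j$, hence so does the convex combination $\gamma_{M,j}$; by the Proposition, $\eta_j(M)=\range(\gamma_{M,j})\subseteq\eta_j$. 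Therefore $\vec{L}_k(M)=(\eta_j)$, so $M$ is a pre-image, and by the second property it contains every pre-image of $\vec{L}_k$; hence it is the maximal one.

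The only point needing care is the statement and use of the tensor criterion --- in particular passing from pure states to the mixed state $\rho_M$ by orthonormal-basis decomposition, and using that ``range contained in $\eta_j$'' is preserved under sums of positive semidefinite operators --- but beyond this and the already established Proposition and definitions nothing new is required, so I anticipate no substantial obstacle.
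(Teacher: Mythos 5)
Your proof is correct. Note that the paper itself offers no proof of this lemma at all --- it is introduced with ``Apparently, we have'' and left as an evident consequence of the definitions --- so there is no argument in the paper to compare against; what you have written is the justification the authors implicitly had in mind. Your two-part structure (the MPI $M=\bigcap_j\eta_j\otimes\iota_{\bar j}$ is itself a pre-image, and every pre-image is contained in $M$) is exactly what ``maximal pre-image'' requires, and your tensor criterion --- that for a pure state $\ket{\psi}$ one has $\range(\gamma_j(\oprod{\psi}{\psi}))\subseteq\eta_j$ if and only if $\ket{\psi}\in\eta_j\otimes\iota_{\bar j}$, via the Schmidt decomposition across the bipartition --- is the right elementary fact and is applied correctly in both directions. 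The passage from pure states to $\rho_M$ by summing positive semidefinite operators (so that ranges add) and the appeal to the paper's Proposition to identify $\eta_j(M)$ with $\range(\gamma_{M,j})$ are both sound, as is the monotonicity observation $F_{\mathcal{S}}\subseteq F_{M}$ giving $\eta_j\subseteq\eta_j(M)$. No gaps.
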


\section{The Semigroup Structure of $\Theta_k$}
\label{sec:semigroup}

To understand the mathematical structure of $\Theta_k$,
we define a binary operation, called sum, denoted by $+$, for any two elements in $\Theta_k$.
\begin{definition}
For $\mathcal{S}_1,\mathcal{S}_2\in\Theta$, let $\vec{L}_k(\mathcal{S}_1)=(\eta_1,\eta_2,\ldots,\eta_m)$ and
$\vec{L}_k(\mathcal{S}_2)=(\xi_1,\xi_2,\ldots,\xi_m)$, define
\begin{equation}
\vec{L}_k(\mathcal{S}_1)+\vec{L}_k(\mathcal{S}_2)=(\eta_1+\xi_1,\eta_2+\xi_2,\ldots,\eta_m+\xi_m).
\end{equation}
\end{definition}

\begin{lemma}
\label{lm:pre} The MPI of
$\vec{L}_k(\mathcal{S}_1)+\vec{L}_k(\mathcal{S}_2)$ under the map
$\mathcal{\vec{L}}_k$ contains the sum of the MPI of
$\vec{L}_k(\mathcal{S}_1)$ and the MPI of
$\vec{L}_k(\mathcal{S}_2)$.
\end{lemma}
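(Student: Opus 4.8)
The plan is to unpack the definition of the MPI and reduce the statement to an elementary inclusion of subspaces. Write $\vec{L}_k(\mathcal{S}_1)=(\eta_1,\ldots,\eta_m)$ and $\vec{L}_k(\mathcal{S}_2)=(\xi_1,\ldots,\xi_m)$, so that $\vec{L}_k(\mathcal{S}_1)+\vec{L}_k(\mathcal{S}_2)=(\eta_1+\xi_1,\ldots,\eta_m+\xi_m)$. By the definition of the MPI, the three subspaces that appear in the statement are $A=\bigcap_j \eta_j\otimes\iota_{\bar{j}}$, $B=\bigcap_j \xi_j\otimes\iota_{\bar{j}}$, and $C=\bigcap_j (\eta_j+\xi_j)\otimes\iota_{\bar{j}}$, and the claim is exactly that $A+B\subseteq C$.

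The one nontrivial ingredient is that tensoring a fixed slot with the \emph{entire} Hilbert space $\iota_{\bar{j}}$ of the remaining $n-k$ particles distributes over the sum of subspaces: for each $j$, $\eta_j\otimes\iota_{\bar{j}}+\xi_j\otimes\iota_{\bar{j}}=(\eta_j+\xi_j)\otimes\iota_{\bar{j}}$, because both sides are the span of the product vectors $v\otimes w$ with $v\in\eta_j+\xi_j$ and $w$ running over a basis of $\iota_{\bar{j}}$. In particular $\eta_j\otimes\iota_{\bar{j}}$ and $\xi_j\otimes\iota_{\bar{j}}$ both lie inside $(\eta_j+\xi_j)\otimes\iota_{\bar{j}}$. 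Given this, take any $v\in A+B$ and write $v=a+b$ with $a\in A$, $b\in B$; for every index $j$ we then have $a\in\eta_j\otimes\iota_{\bar{j}}\subseteq(\eta_j+\xi_j)\otimes\iota_{\bar{j}}$ and $b\in\xi_j\otimes\iota_{\bar{j}}\subseteq(\eta_j+\xi_j)\otimes\iota_{\bar{j}}$, hence $v=a+b\in(\eta_j+\xi_j)\otimes\iota_{\bar{j}}$. Since $j$ is arbitrary, $v\in C$, which is the desired inclusion.

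There is essentially no serious obstacle here: the argument is bookkeeping once the MPI is spelled out, and the only real care needed is keeping track of which $n-k$ particles make up $\iota_{\bar{j}}$ for each index $j$. One small point worth a remark is that the MPI of $\vec{L}_k(\mathcal{S}_1)+\vec{L}_k(\mathcal{S}_2)$ should be well-defined, i.e.\ this tuple should belong to $\Theta_k$; this falls out of the same computation, since $A,B\subseteq A+B$ gives $F_A,F_B\subseteq F_{A+B}$ and hence $\eta_j+\xi_j\subseteq\mathcal{\vec{L}}_k(A+B)_j$ componentwise, while $A+B\subseteq C$ together with the Proposition gives the reverse inclusion, so $A+B$ is a pre-image of the sum tuple and $C=\bigcap_j(\eta_j+\xi_j)\otimes\iota_{\bar{j}}$ is its MPI by the maximality lemma. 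The lemma is then precisely $A+B\subseteq C$.
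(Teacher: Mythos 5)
Your proof is correct and follows essentially the same route as the paper's: both reduce the claim to the elementary inclusion $\bigcap_j(\eta_j\otimes\iota_{\bar{j}})+\bigcap_j(\xi_j\otimes\iota_{\bar{j}})\subseteq\bigcap_j\bigl((\eta_j+\xi_j)\otimes\iota_{\bar{j}}\bigr)$ and verify it by decomposing a vector of the left-hand side as $a+b$ and checking membership index by index. Your closing remark on why the sum tuple actually lies in $\Theta_k$ (so that its MPI is well-defined) is a welcome extra bit of care that the paper only supplies afterwards, in the closure argument of Theorem~\ref{th:semigroup}.
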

\begin{proof}
Recall that for any vector
$\vec{L}_k=(\eta_1,\eta_2,\ldots,\eta_m)\in\Theta_k$, the MPI under
the map $\mathcal{\vec{L}}_k$ is given by $\bigcap_j \eta_j\otimes
\iota_{\bar{j}}$. Let
$\vec{L}_k(\mathcal{S}_1)=(\eta_1,\eta_2,\ldots,\eta_m)$,
$\vec{L}_k(\mathcal{S}_2)=(\xi_1,\xi_2,\ldots,\xi_m)$, then one has
\begin{equation}
\bigcap_j(\eta_j\otimes \iota_{\bar{j}}+\xi_j\otimes \iota_{\bar{j}})\supseteq \bigcap_j(\eta_j\otimes \iota_{\bar{j}})+\bigcap_j(\xi_j\otimes \iota_{\bar{j}}),
\end{equation}
where the left hand side is the the MPI of $\vec{L}_k(\mathcal{S}_1)+\vec{L}_k(\mathcal{S}_2)$,
and the right hand side is the sum of the MPI of $\vec{L}_k(\mathcal{S}_1)$ and the MPI of $\vec{L}_k(\mathcal{S}_2)$. Note that any state $\ket{\phi}$ in the space given by the right hand side can be written as $\ket{\phi}=\ket{\phi_{\alpha}}+\ket{\phi_{\beta}}$ where $\ket{\phi_{\alpha}}\in \bigcap_j (\eta_j\otimes \iota_{\bar{j}})$ and $\ket{\phi_{\beta}} \in \bigcap_j (\xi_j \otimes \iota_{\bar{j}})$. Therefore $\ket{\phi_{\alpha}}\in\eta_j\otimes \iota_{\bar{j}}$ and $\ket{\phi_{\beta}}\in\xi_j \otimes \iota_{\bar{j}}$ for any $j$. Then  $\ket{\phi}=\ket{\phi_{\alpha}}+\ket{\phi_{\beta}}\in\eta_j\otimes \iota_{\bar{j}}+\xi_j\otimes \iota_{\bar{j}}$ for any $j$, which then follows the inclusion.
\end{proof}

We examine an example of Lemma~\ref{lm:pre}.
\begin{example}
We consider a three-qubit system with qubits $A,B,C$. For the three-qubit  subspace $\mathcal{S}_1=\Span\{\frac{1}{\sqrt{3}}(\ket{001}+\ket{010}+\ket{100})\}$ and $\mathcal{S}_2=\Span\{\ket{111}\}$, we have
\begin{eqnarray}
&&\vec{L}_2(\mathcal{S}_1)=(\Span\{\ket{00},\frac{1}{\sqrt{2}}(\ket{01}+\ket{10})\},\nonumber\\
&&\Span\{\ket{00},\frac{1}{\sqrt{2}}(\ket{01}+\ket{10})\},\Span\{\ket{00},\frac{1}{\sqrt{2}}(\ket{01}+\ket{10})\}),
\end{eqnarray}
and
\begin{equation}
\vec{L}_2(\mathcal{S}_2)=(\Span\{\ket{11}\},\Span\{\ket{11}\},\Span\{\ket{11}\}).
\end{equation}
Therefore,
\begin{eqnarray}
&&\vec{L}_2(\mathcal{S}_1)+\vec{L}_2(\mathcal{S}_2)=(\Span\{\ket{00},\frac{1}{\sqrt{2}}(\ket{01}+\ket{10}),\ket{11}\},\nonumber\\
&&\Span\{\ket{00},\frac{1}{\sqrt{2}}(\ket{01}+\ket{10}),\ket{11}\},\Span\{\ket{00},\frac{1}{\sqrt{2}}(\ket{01}+\ket{10}),\ket{11}\}).
\end{eqnarray}
The MPI of $\vec{L}_2(\mathcal{S}_1)+\vec{L}_2(\mathcal{S}_2)$ is then the three-qubit symmetric space spanned by
\begin{equation}
\{\ket{000},\frac{1}{\sqrt{3}}(\ket{001}+\ket{010}+\ket{100}),\frac{1}{\sqrt{3}}(\ket{110}+\ket{101}+\ket{011}),\ket{111}\},
\end{equation}
which contains the sum
of the MPI of $\vec{L}_2(\mathcal{S}_1)$ ($\Span\{\ket{000},\frac{1}{\sqrt{3}}(\ket{001}+\ket{010}+\ket{100})\}$) and
the MPI of $\vec{L}_2(\mathcal{S}_2)$ ($\Span\{\ket{111}\}$).
\end{example}

We are now ready to establish the semigroup structure of $\Theta_k$.

\begin{theorem}
\label{th:semigroup}
With the binary operation sum, the set of all $k$-RSs $\Theta_k$ form an idempotent, commutative semigroup without zero element.
\end{theorem}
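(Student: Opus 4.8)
The plan is to check the semigroup axioms one at a time, with closure of $\Theta_k$ under $+$ being the only step that needs genuine argument. Fix $\mathcal{S}_1,\mathcal{S}_2\in\Theta$ and write $\vec{L}_k(\mathcal{S}_1)=(\eta_j)$, $\vec{L}_k(\mathcal{S}_2)=(\xi_j)$. Commutativity, associativity and idempotency will be inherited componentwise from the ordinary sum of vector subspaces ($\eta_j+\xi_j=\xi_j+\eta_j$, $(\eta_j+\xi_j)+\chi_j=\eta_j+(\xi_j+\chi_j)$, $\eta_j+\eta_j=\eta_j$), so once closure is in hand these are immediate. The heart of the proof is therefore to exhibit a nonempty $\mathcal{T}\in\Theta$ with $\vec{L}_k(\mathcal{T})=(\eta_j+\xi_j)_j$.

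For closure I would take $\mathcal{T}$ to be the MPI of the tuple $(\eta_j+\xi_j)_j$, i.e.\ $\mathcal{T}=\bigcap_j(\eta_j+\xi_j)\otimes\iota_{\bar j}$, and prove $\vec{L}_k(\mathcal{T})=(\eta_j+\xi_j)_j$ by a double inclusion in each component. Write $\vec{L}_k(\mathcal{T})=(\zeta_j)$; by the Proposition, $\zeta_j=\range(\gamma_{M,j})$ where $\gamma_{M,j}$ is the $j$-th $k$-RDM of the maximally mixed state $\rho_M$ of $\mathcal{T}$. The inclusion $\zeta_j\subseteq\eta_j+\xi_j$ comes from $\range(\rho_M)=\mathcal{T}\subseteq(\eta_j+\xi_j)\otimes\iota_{\bar j}$: the very partial-trace computation used to prove the Proposition shows that tracing out the complementary $n-k$ particles keeps the range inside $\eta_j+\xi_j$. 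For the reverse inclusion, let $\mathcal{T}_1,\mathcal{T}_2$ denote the MPIs of $\vec{L}_k(\mathcal{S}_1),\vec{L}_k(\mathcal{S}_2)$; Lemma~\ref{lm:pre} gives $\mathcal{T}\supseteq\mathcal{T}_1+\mathcal{T}_2\supseteq\mathcal{T}_1$ and likewise $\mathcal{T}\supseteq\mathcal{T}_2$. Since an MPI is in particular a pre-image we have $\vec{L}_k(\mathcal{T}_1)=(\eta_j)$, and $\vec{L}_k$ is monotone under subspace inclusion — this is exactly what the proof of the Proposition yields, applied to $\rho=\rho_M(\mathcal{T}_1)$, whose range lies in $\mathcal{T}_1\subseteq\mathcal{T}$ — so $\eta_j\subseteq\zeta_j$; symmetrically $\xi_j\subseteq\zeta_j$, hence $\eta_j+\xi_j\subseteq\zeta_j$. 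Combining the inclusions, $\zeta_j=\eta_j+\xi_j$ for all $j$; since moreover $\mathcal{T}\supseteq\mathcal{T}_1\supseteq\mathcal{S}_1\neq\{0\}$, $\mathcal{T}$ is a legitimate element of $\Theta$, and $\vec{L}_k(\mathcal{S}_1)+\vec{L}_k(\mathcal{S}_2)=\vec{L}_k(\mathcal{T})\in\Theta_k$.

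For the absence of a zero element I would argue that $\Theta_k$ contains no tuple with a vanishing component: every $\mathcal{S}\in\Theta$ has $\dim\mathcal{S}\ge 1$, so for any $\ket\psi\in\mathcal{S}$ each $k$-RDM of $\ket\psi\bra\psi$ is a nonzero (trace-one) operator with nonzero range, which forces every $\eta_j$ of $\vec{L}_k(\mathcal{S})$ to be nonzero. Thus the all-zero tuple $(0,\dots,0)\notin\Theta_k$. This is the relevant obstruction: any identity $e=(\epsilon_j)$ for $+$ would have to satisfy $\epsilon_j\subseteq\xi_j$ for every component of every element of $\Theta_k$, and the $j$-th components of the images of two distinct product basis states (e.g.\ $\ket{0}^{\otimes n}$ and $\ket{1}^{\otimes n}$) are distinct one-dimensional spaces with trivial intersection, so $\epsilon_j=0$ for all $j$ and $e=(0,\dots,0)$, which is not in $\Theta_k$.

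I expect the reverse inclusion in the closure step to be the main obstacle, as it is the one place where Lemma~\ref{lm:pre} and the monotonicity extracted from the Proposition both have to be invoked; everything else reduces to routine bookkeeping with sums of subspaces.
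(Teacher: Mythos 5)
Your proof is correct and follows essentially the same route as the paper: closure via Lemma~\ref{lm:pre} and the MPI of the summed tuple, the remaining semigroup axioms componentwise, and the non-existence of a zero element by showing the only candidate identity is the all-zero tuple, which is excluded. In fact your double-inclusion verification that $\vec{L}_k(\mathcal{T})=(\eta_j+\xi_j)_j$ (and your explicit treatment of associativity) supplies detail that the paper's one-line appeal to Lemma~\ref{lm:pre} leaves implicit.
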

\begin{proof}
First of all, we show the set $\Theta_k$ is closed under the sum operation. This is a direct consequence of Lemma~\ref{lm:pre}, as the
MPI of $\vec{L}_k(\mathcal{S}_1)$ contains $\mathcal{S}_1$ and the
MPI of $\vec{L}_k(\mathcal{S}_2)$ contains $\mathcal{S}_2$. Therefore, the MPI of
$\vec{L}_k(\mathcal{S}_1)+\vec{L}_k(\mathcal{S}_2)$ contains $\mathcal{S}_1+\mathcal{S}_2$, which is a non-empty
$n$-particle subspace, in this sense $\Theta_k$ is a semigroup. $\Theta_k$ is idempotent given
$\vec{L}_k(\mathcal{S}_1)+\vec{L}_k(\mathcal{S}_1)=\vec{L}_k(\mathcal{S}_1)$, and commutative given
$\vec{L}_k(\mathcal{S}_1)+\vec{L}_k(\mathcal{S}_2)=\vec{L}_k(\mathcal{S}_2)+\vec{L}_k(\mathcal{S}_1)$.

For the commutative semigroup $\Theta_k$ whose group operation is
$+$, the identity element is usually called zero element, denote by
$\vec{O}_k$, which satisfies for any $\vec{L}_k\in\Theta_k$,
$\vec{L}_k+\vec{O}_k=\vec{O}_k+\vec{L}_k=\vec{L}_k$. However, we
show that there is no such a zero element $\vec{O}_k$ in $\Theta_k$.
If there does exist such an $\vec{O}_k=(o_1,o_2,\ldots,o_m)$, where
each $o_i$ must be the empty space of $k$-particles, then the
pre-image of $\vec{O}_k$ is nothing but the empty space $O$ of the
$n$-particle space. Remember that the purpose that we introduce
$\Theta_k$ is to study the ground-state space structure of
frustration-free hamiltonians, so we suppose the space involved is
at least one-dimensional. Thus, in $\Theta_k$ such a zero element
does not exist.
\end{proof}

\section{The Semilattice Structure of $\Theta_k$}
\label{sec:semilattice}

It turns out that this special kind of semigroup discussed in
Theorem~\ref{th:semigroup}, which is idempotent and commutative, is
a structure called semilattice, which is widely studied in order
theory~\cite{Kal83}.  The binary operation $+$ is usually called
`join', with a notation $\vee$. So the semigroup $\Theta_k$ is then
a join-semilattice. A formal definition of join-semilattice is given
below, then one can readily check for $\Theta_k$.
\begin{definition}
\label{def:joinsemi}
A join-semilattice is an algebraic structure $<S,\vee>$ consisting of a set $S$ with a binary
operation $\vee$, called join, such that for all members $x, y$, and $z$ of $S$, the following identities hold:
\begin{enumerate}
\item Associativity: $x \vee (y \vee z) = (x \vee y) \vee z$.
\item Commutativity: $x \vee y = y \vee x$.
\item Idempotency: $x \vee x = x$.
\end{enumerate}
\end{definition}
Equivalently, there is another order-theoretic definition of join-semilattice~\cite{Kal83}.
\begin{definition}
A set $S$ partially ordered by the binary relation $\leq$ is a join-semilattice if
for all elements $x$ and $y$ of $S$, the least upper bound of the set ${x, y}$ exists.
The least upper bound of the set ${x, y}$ is called the join of $x$ and $y$, denoted
by $x \vee y$.
\end{definition}
For $\Theta_k$, the binary relation $\leq$ is actually the set
inclusion $\subseteq$. More precisely, for two elements
$\vec{L}_k=(\eta_1,\eta_2,\ldots,\eta_m)$ and
$\vec{L}'_k=(\xi_1,\xi_2,\ldots,\xi_m)$ in $\Theta_k$,
$\vec{L}_k\leq\vec{L}'_k$ if $\eta_j\subseteq\xi_j$ for $\forall\
j$. Moreover, we say $\vec{L}_k<\vec{L}'_k$ if
$\eta_j\subseteq\xi_j$ for $\forall\ j$ and for at least one $j$,
$\eta_j\subset\xi_j$.

Therefore, in order-theoretic terms, $\Theta_k$ is a
join-semilattice without zero element. This allows us to investigate
the structure of $\Theta_k$ within the order-theoretic framework.
Let us first mention that there are some important order-theoretic
notations regarding join-semilaltice~\cite{Kal83}.
\begin{definition}
$x$ is an atom if there exists no nonzero element $y$ of $L$ such that $y < x$.
\end{definition}
\begin{definition}
$x$ is a join irreducible iff $x = a\vee b$ implies $x = a$ or $x = b$ for any $a,b$ in $S$.
\end{definition}
Any atom is also a join irreducible, however the reverse is generally not true.
\begin{definition}
$x$ is a join prime iff $x \leq a \vee b$ implies $x \leq a$ or $x \leq b$.
\end{definition}
Any join prime element is also a join irreducible, however the reverse might not be true.

Now we study in more detail the semilattice structure of $\Theta_k$, by examining the property of atoms, join primes and join irreducibles. We start from atoms.

Intuitively, the name `atom', borrowing from physics, means the
`basic' elements (or building blocks) of $\Theta_k$. Indeed, by
definition, atoms are smallest elements of  $\Theta_k$. For any
$\vec{L}_k(\mathcal{S})=(\eta_1,\eta_2,\ldots,\eta_3)\in\Theta_k$,
if each $\eta_j$ is only of dimension one, then
$\vec{L}_k(\mathcal{S})$ is an atom, and the pre-image of
$\vec{L}_k(\mathcal{S})$ is a product state of $n$-particles. In
other words, $k$-RSs corresponding to a single product state are
atoms. However, atoms can be much more complicated, that is, not all
the atoms correspond to single product states. We have the following
proposition.
\begin{proposition}
\label{pro:pure}
If the MPI of an element $\vec{L}_k\in\Theta_k$ is a single pure state, then $\vec{L}_k$ is an atom.
\end{proposition}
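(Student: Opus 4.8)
The plan is to argue by contradiction. Suppose $\vec{L}_k = (\eta_1,\ldots,\eta_m)$ has an MPI that is a single pure state $\ket{\psi}$, but $\vec{L}_k$ is not an atom. Then there exists a nonzero $\vec{L}'_k = (\xi_1,\ldots,\xi_m) \in \Theta_k$ with $\vec{L}'_k < \vec{L}_k$, i.e. $\xi_j \subseteq \eta_j$ for all $j$ with strict containment for at least one index. Let $\mathcal{S}'$ be any pre-image of $\vec{L}'_k$ under $\vec{\mathcal{L}}_k$; by Lemma~1 the MPI of $\vec{L}'_k$, namely $\bigcap_j \xi_j \otimes \iota_{\bar j}$, is the maximal such pre-image, so $\mathcal{S}' \subseteq \bigcap_j \xi_j \otimes \iota_{\bar j} \subseteq \bigcap_j \eta_j \otimes \iota_{\bar j}$, and the latter is precisely the MPI of $\vec{L}_k$, which equals $\Span\{\ket{\psi}\}$. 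Hence $\mathcal{S}'$ is a nonempty subspace of a one-dimensional space, so $\mathcal{S}' = \Span\{\ket{\psi}\}$.

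Next I would apply the Proposition on well-definedness (the one characterizing $\vec{L}_k(\mathcal{S})$ via the maximally mixed state of $\mathcal{S}$): since $\mathcal{S}' = \Span\{\ket{\psi}\}$ is one-dimensional, its maximally mixed state is $\rho_M = \ket{\psi}\bra{\psi}$, and therefore $\vec{L}_k(\mathcal{S}') = (\range(\gamma_1),\ldots,\range(\gamma_m))$ where $(\gamma_j) = \vec{R}_k(\ket{\psi}\bra{\psi})$ are the $k$-RDMs of $\ket{\psi}$. But $\ket{\psi}$ is itself the MPI of $\vec{L}_k$, hence a pre-image of $\vec{L}_k$, so $\vec{L}_k(\mathcal{S}') = \vec{L}_k$. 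This contradicts $\vec{L}'_k < \vec{L}_k$, since $\vec{L}'_k = \vec{\mathcal{L}}_k(\mathcal{S}')$. Therefore no such $\vec{L}'_k$ exists and $\vec{L}_k$ is an atom.

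The only genuinely delicate point is the chain of inclusions $\mathcal{S}' \subseteq \bigcap_j \xi_j \otimes \iota_{\bar j} \subseteq \bigcap_j \eta_j \otimes \iota_{\bar j}$: the first inclusion is exactly the content of Lemma~1 (the MPI is the maximal pre-image), and the second follows termwise from $\xi_j \subseteq \eta_j$ together with the fact that $(\,\cdot\,)\otimes\iota_{\bar j}$ and $\bigcap_j$ are both monotone under subspace inclusion. I expect this bookkeeping with the MPI to be the main obstacle only in the sense of needing to be stated carefully; conceptually it is routine. Everything else is a direct invocation of the earlier Proposition and the observation that a nonempty subspace of a one-dimensional space is the whole space.
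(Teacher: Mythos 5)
Your proposal is correct and follows essentially the same route as the paper's proof: assume a strictly smaller $\vec{L}'_k$ exists, show its (nonempty) pre-image must sit inside the one-dimensional MPI of $\vec{L}_k$ and hence equal $\Span\{\ket{\psi}\}$, and conclude $\vec{L}'_k=\vec{L}_k(\Span\{\ket{\psi}\})=\vec{L}_k$, a contradiction. You merely spell out the monotonicity of $\bigcap_j(\,\cdot\,)\otimes\iota_{\bar j}$ and the appeal to the well-definedness proposition more explicitly than the paper does.
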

\begin{proof}
Suppose $\vec{L}_k$ is not an atom, then there exists
$\vec{L}'_k\in\Theta_k$ such that $\vec{L}'_k<\vec{L}_k$.
Consequently, the MPI of $\vec{L}'_k$ under the map
$\mathcal{\vec{L}}_k$ is contained in the pre-image of $\vec{L}_k$.
However, the MPI of $\vec{L}_k$ is a single pure state, so the MPI
of $\vec{L}'_k$ can only be this pure state. Therefore
$\vec{L}'_k=\vec{L}_k$, which gives that $\vec{L}_k$ is an atom of
$\Theta_k$.
\end{proof}

We provide an example for Proposition~\ref{pro:pure}.
\begin{example}
Consider a three-qutrit system with qutrits $A,B,C$. Choose the basis vectors of the Hilbert space for each qutrit to be $\ket{0},\ket{1},\ket{2}$.
Consider the following $\vec{L}_2\in\Theta_2$.
\begin{eqnarray}
&&\vec{L}_2=(\Span\{\sqrt{\frac{2}{3}}(\ket{00}-\frac{1}{\sqrt{2}}\ket{12})\sqrt{\frac{2}{3}}(\ket{11}-\frac{1}{\sqrt{2}}\ket{02})\},\nonumber\\
&&\Span\{\sqrt{\frac{2}{3}}(\ket{00}-\frac{1}{\sqrt{2}}\ket{21}),\sqrt{\frac{2}{3}}(\ket{11}-\frac{1}{\sqrt{2}}\ket{20})\},\nonumber\\
&&\Span\{\ket{00},\frac{1}{\sqrt{2}}(\ket{01}+\ket{10}),\ket{11}\}).
\end{eqnarray}
The MPI of $\vec{L}_2$ under the map $\mathcal{\vec{L}}_k$ is a
single pure state $\ket{\psi}$ which is given by
\begin{equation}
\ket{\psi}=\frac{1}{\sqrt{3}}\{\ket{000}-\frac{1}{\sqrt{2}}(\ket{021}+\ket{120})+\ket{111}\}.
\end{equation}
Apparently $\ket{\psi}$ is not a product state of three qutrits.
\end{example}

Now we turn to join primes of $\Theta_k$. Similarly as for the atom case, we start to discuss the case for $k$-RSs whose MPI is a single product state. Unlike in the atom case, and indeed quite counterintuitively, we observe the following lemma.
\begin{lemma}
In $\Theta_k$, $k$-RSs whose MPI is a single product state is not a join prime.
\end{lemma}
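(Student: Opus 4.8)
The plan is to produce, for an arbitrary $\vec{L}_k\in\Theta_k$ whose MPI is a single product state, two explicit elements of $\Theta_k$ that witness the failure of join primality. Since a local unitary $U_1\otimes\cdots\otimes U_n$ maps subspaces to subspaces and intertwines all the reductions, it induces an order automorphism of $(\Theta_k,\leq)$; hence ``join prime'' is a local-unitary invariant and I may assume the product state is $\ket{0}^{\otimes n}$. By the Proposition relating $\vec{L}_k(\mathcal{S})$ to the maximally mixed state, $\vec{L}_k=\vec{L}_k(\Span\{\ket{0}^{\otimes n}\})$ then has every component equal to the one-dimensional space $\Span\{\ket{0}^{\otimes k}\}$ on the corresponding $k$-subset.

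Next I would fix one $k$-subset $J_0$ and set
$\mathcal{S}_1=\Span\{(\ket{0}^{\otimes k}+\ket{1}^{\otimes k})_{J_0}\otimes\ket{0}^{\otimes(n-k)}\}$
and
$\mathcal{S}_2=\Span\{(\ket{0}^{\otimes k}-\ket{1}^{\otimes k})_{J_0}\otimes\ket{0}^{\otimes(n-k)}\}$,
and compute $\vec{L}_k(\mathcal{S}_1)$ and $\vec{L}_k(\mathcal{S}_2)$ subset by subset. For the subset $J_0$ itself the defining vector is a product across $J_0$ versus its complement, so the $J_0$-component is exactly $\Span\{\ket{0}^{\otimes k}\pm\ket{1}^{\otimes k}\}$, a one-dimensional space \emph{not} containing $\ket{0}^{\otimes k}$. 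For any other $k$-subset $J$ one has $J\neq J_0$, so $J$ omits at least one particle of $J_0$; tracing those particles out decoheres the GHZ-type correlation on $J_0$, and since every particle outside $J_0$ sits in $\ket{0}$ and is uncorrelated, $\rho_J$ factors as $\rho_{J\cap J_0}\otimes\ket{0}\bra{0}^{\otimes|J\setminus J_0|}$, whose range is $\Span\{\ket{0}^{\otimes|J\cap J_0|},\ket{1}^{\otimes|J\cap J_0|}\}\otimes\Span\{\ket{0}\}^{\otimes|J\setminus J_0|}$ and therefore contains $\ket{0}^{\otimes k}$. Consequently the $J$-component of $\vec{L}_k(\mathcal{S}_1)$ (and likewise of $\vec{L}_k(\mathcal{S}_2)$) contains $\ket{0}^{\otimes k}$ for every $J\neq J_0$ but the $J_0$-component does not, so $\vec{L}_k\not\leq\vec{L}_k(\mathcal{S}_1)$ and $\vec{L}_k\not\leq\vec{L}_k(\mathcal{S}_2)$.

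Finally, recalling that in $\Theta_k$ the join $\vee$ is exactly the componentwise sum $+$ (Theorem~\ref{th:semigroup} and the semilattice discussion of Section~\ref{sec:semilattice}), the $J_0$-component of $\vec{L}_k(\mathcal{S}_1)\vee\vec{L}_k(\mathcal{S}_2)$ equals $\Span\{\ket{0}^{\otimes k}+\ket{1}^{\otimes k}\}+\Span\{\ket{0}^{\otimes k}-\ket{1}^{\otimes k}\}=\Span\{\ket{0}^{\otimes k},\ket{1}^{\otimes k}\}$, which contains $\ket{0}^{\otimes k}$, while every other component already contains $\ket{0}^{\otimes k}$; hence $\vec{L}_k\leq\vec{L}_k(\mathcal{S}_1)\vee\vec{L}_k(\mathcal{S}_2)$. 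Combined with the previous paragraph this exhibits $\vec{L}_k$ as a non--join-prime element. The one genuinely fiddly point, where I would spend the most care, is the claim that $\range(\rho_J)\ni\ket{0}^{\otimes k}$ for $J\neq J_0$: the clean way is to peel off the particles of $J\setminus J_0$ one at a time (each uncorrelated and in state $\ket{0}$, so they factor out) and then observe that the reduction of the GHZ-type state to the \emph{proper} subset $J\cap J_0\subsetneq J_0$ is diagonal in the computational basis with $\ket{0}^{\otimes|J\cap J_0|}$ in its range. A one-line remark covers the degenerate case $n=k$, where $J_0$ is the only subset and $\mathcal{S}_1,\mathcal{S}_2$ are just $\Span\{\ket{0}^{\otimes n}\pm\ket{1}^{\otimes n}\}$.
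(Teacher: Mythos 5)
Your proof is correct, but it uses different witnesses than the paper, and the mechanism by which the join recovers the product state's reduced spaces is genuinely different. The paper, after the same local-unitary reduction to $\ket{0}^{\otimes n}$, takes two Bell-type states supported on two \emph{different} pairs of particles, $\ket{\psi_1}=\frac{1}{\sqrt 2}(\ket{00}+\ket{11})\ket{0}\cdots\ket{0}$ and $\ket{\psi_2}=\ket{0}\cdots\ket{0}\frac{1}{\sqrt 2}(\ket{00}+\ket{11})$: each has exactly one ``bad'' component on which the reduced space misses $\ket{00}$, and in the join that component is repaired by the \emph{other} state's reduced space there, which does contain $\ket{00}$. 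You instead place both witnesses on the \emph{same} $k$-subset $J_0$, using the two GHZ-type states $\ket{0}^{\otimes k}\pm\ket{1}^{\otimes k}$ tensored with $\ket{0}^{\otimes(n-k)}$; the repair on $J_0$ happens because the two one-dimensional bad components sum to $\Span\{\ket{0}^{\otimes k},\ket{1}^{\otimes k}\}\ni\ket{0}^{\otimes k}$. Both mechanisms are valid, and your version has two advantages: it is written uniformly in $k$ (the paper works out $k=2$ and only asserts that a ``similar idea applies'' for $k>2$), and it covers the degenerate case $n=k$, where the paper's two states coincide and its construction gives nothing. The one step that genuinely needs care --- that $\range(\rho_J)\ni\ket{0}^{\otimes k}$ for every $k$-subset $J\neq J_0$ --- you have identified and handled correctly: the particles outside $J_0$ factor out in $\ket{0}$, and the reduction of $\ket{0}^{\otimes k}\pm\ket{1}^{\otimes k}$ to a proper nonempty subset of $J_0$ is the diagonal mixture of $\ket{0\cdots 0}$ and $\ket{1\cdots 1}$, whose range contains $\ket{0\cdots 0}$ (the case $J\cap J_0=\emptyset$ being trivial).
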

\begin{proof}
Consider an $n$-particle system. Up to local unitary equivalence, any product state can be written as $\ket{\psi_0}=\ket{0}\ket{0}\ldots\ket{0}$.
Now consider other two states $\ket{\psi_1}=\frac{1}{\sqrt{2}}(\ket{00}+\ket{11})\ket{0}\ldots\ket{0}$ and
$\ket{\psi_2}=\ket{0}\ldots\ket{0}\frac{1}{\sqrt{2}}(\ket{00}+\ket{11})$. It is straightforward to check that
\begin{equation}
\vec{L}_2(\Span\{\ket{\psi_1}\})\vee\vec{L}_2(\Span\{\ket{\psi_2}\})\geq\vec{L}_2(\Span\{\ket{\psi_0}\}).
\end{equation}
That is, the join of the $2$-RSs of $\ket{\psi_1}$
and $\ket{\psi_2}$ contains the $2$-RSs of $\ket{\psi_0}$. However, apparently $\vec{L}_2(\Span\{\ket{\psi_0}\})\nleq \vec{L}_2(\Span\{\ket{\psi_1}\})$
and $\vec{L}_2(\Span\{\ket{\psi_0}\})\nleq \vec{L}_2(\Span\{\ket{\psi_2}\})$. So $2$-RSs whose MPI is a single
product state is not a join prime. Similar idea applies to $k>2$, therefore $k$-RSs whose MPI is a single product state is not join prime.
\end{proof}

As we have already demonstrated, the $k$-RSs whose MPI is a single
product state is a `minimum' element in $\Theta_k$ (atoms). However,
they are not join primes, which hints the following important structure property of $\Theta_k$.
\begin{theorem}
There is no join prime in $\Theta_k$.
\end{theorem}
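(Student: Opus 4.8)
The plan is to show that any nonzero $\vec{L}_k \in \Theta_k$ fails the join-prime condition, by exhibiting two elements $\vec{L}_k^{(1)}, \vec{L}_k^{(2)} \in \Theta_k$ whose join dominates $\vec{L}_k$, yet such that $\vec{L}_k$ is dominated by neither. The previous lemma handles the case where the MPI of $\vec{L}_k$ is a single product state, so the task is to reduce the general case to that situation, or else to adapt the same perturbative trick directly. First I would take the MPI $\mathcal{S}$ of $\vec{L}_k$, pick any pure state $\ket{\psi} \in \mathcal{S}$, and observe that $\vec{L}_k(\Span\{\ket{\psi}\}) \leq \vec{L}_k(\mathcal{S}) = \vec{L}_k$ since $\range$ of each reduced density matrix of $\ket{\psi}$ sits inside the corresponding $\eta_j$ (this is exactly the containment proven in the Proposition of Sec.~\ref{sec:kRS}).

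The core idea is then the following. It suffices to find, for the given $\vec{L}_k$, two strictly smaller-or-incomparable elements whose join already covers $\vec{L}_k$; by transitivity of $\leq$ it is enough to cover $\vec{L}_k(\Span\{\ket{\psi}\})$ for \emph{some} $\ket{\psi}\in\mathcal S$ \emph{and} separately ensure the rest of $\vec{L}_k$ is covered. Concretely, I would fix a spanning set $\ket{\psi_1},\dots,\ket{\psi_r}$ of $\mathcal{S}$ and set $\vec{L}_k^{(1)} = \vec{L}_k(\Span\{\ket{\psi_1},\dots,\ket{\psi_{r-1}},\ket{\phi}\})$ where $\ket{\phi}$ is a generic small perturbation of $\ket{\psi_r}$ inside a slightly enlarged space, and $\vec{L}_k^{(2)} = \vec{L}_k(\Span\{\ket{\psi_1},\dots,\ket{\psi_{r-1}},\ket{\phi'}\})$ with $\ket{\phi'}$ a second, independent such perturbation. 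Each of $\vec{L}_k^{(1)}, \vec{L}_k^{(2)}$ can be arranged to be incomparable with $\vec{L}_k$ (by making the perturbation point partly ``out of'' $\mathcal{S}$ in at least one tensor factor while losing $\range(\gamma_{M,j})$ content in another), while their join contains the reduced spaces of all of $\ket{\psi_1},\dots,\ket{\psi_{r-1}}$ and — because two generic points span the same local ranges as the whole pencil through $\ket{\psi_r}$ — also those of $\ket{\psi_r}$, hence contains $\vec{L}_k$. The cleanest implementation may instead mimic the product-state lemma verbatim: write $\mathcal S$ in a local basis, attach an entangling correction $\frac1{\sqrt2}(\ket{00}+\ket{11})$ on one pair of sites to form $\ket{\psi_1}$ and on a disjoint pair to form $\ket{\psi_2}$, and check the three conditions by direct inspection of the reduced spaces.

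I would then verify the two non-domination claims and the one domination claim. The domination claim $\vec{L}_k^{(1)}\vee\vec{L}_k^{(2)}\geq\vec{L}_k$ follows because each local reduced space $\eta_j$ of $\vec{L}_k$ is the span of $\langle\phi_\beta|\psi_\alpha\rangle$ over all $\alpha$ (by the Proposition), and every such vector appears in the $j$-th component of $\vec{L}_k^{(1)}$ or $\vec{L}_k^{(2)}$ by construction. The two non-domination claims $\vec{L}_k\nleq\vec{L}_k^{(i)}$ amount to checking that in some component $j$ the perturbation has made $\xi_j^{(i)}$ fail to contain $\eta_j$ — which is where the perturbation must be chosen carefully so as not to accidentally restore the missing direction.

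The main obstacle I expect is precisely this careful choice of the perturbed states so that each $\vec{L}_k^{(i)}$ genuinely drops below $\vec{L}_k$ in at least one slot while their join still recovers everything: one must rule out the degenerate situation where the perturbations, combined, enlarge too much (making some $\vec{L}_k^{(i)}\geq\vec{L}_k$ after all) or too little (making the join miss part of $\vec{L}_k$). A dimension/genericity argument — choosing the perturbations from a Zariski-open set, or simply exhibiting an explicit two-qubit (two-qudit) entangling gadget on disjoint site-pairs as in the product-state lemma and checking the finitely many reduced-space inclusions by hand — should resolve this, and I would write the proof in the latter explicit form for transparency.
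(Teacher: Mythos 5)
You have the right proof obligations in view (exhibit $A,B$ with $\vec{L}_k\le A\vee B$ but $\vec{L}_k\nleq A$ and $\vec{L}_k\nleq B$), and the domination half of your plan is sound: if $V_1+V_2\supseteq\mathcal{S}$ then $\vec{L}_k(V_1)\vee\vec{L}_k(V_2)=\vec{L}_k(V_1+V_2)\ge\vec{L}_k(\mathcal{S})$. The genuine gap is in the non-domination half, and your proposed fix points in the wrong direction. Ranks of reduced density matrices are lower semicontinuous and are \emph{maximized} on a Zariski-open set, so a generic perturbation $\ket{\phi}$ of $\ket{\psi_r}$ typically \emph{enlarges} every component $\xi_j^{(1)}$ until it contains $\eta_j$ — i.e.\ generically you land exactly in the failure mode $\vec{L}_k^{(1)}\ge\vec{L}_k$ that you flag as the obstacle. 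What makes the product-state lemma work is precisely that the Bell correction is a carefully \emph{structured}, rank-preserving rotation of the local support (it sends $\Span\{\ket{00}\}$ to $\Span\{\ket{00}+\ket{11}\}$, same dimension, no longer containing $\ket{00}$); a generic direction would instead increase the local rank and restore containment. Moreover, the ``attach a Bell pair on disjoint site-pairs'' gadget has no clear analogue when the MPI is an entangled state or a higher-dimensional subspace (e.g.\ for the GHZ state the two-site reduced spaces are already two-dimensional and there is no local product basis to decorate), so the reduction you sketch does not actually cover the general case, which is the whole content of the theorem beyond the preceding lemma.

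The paper's construction avoids genericity entirely and is worth comparing against. For a non-product $\ket{\psi_0}$, pick a particle $\alpha$ entangled with the rest; let $\rho_\alpha$ and $\rho_{\bar\alpha}$ be the maximally mixed states on the ranges of the corresponding reduced density matrices, and split the support of $\rho_\alpha$ (which has rank $\ge 2$) into two nonempty blocks, $\rho_\alpha=\rho_a+\rho_b$. The two elements are $\vec{L}_k(\range(\rho_a\otimes\rho_{\bar\alpha}))$ and $\vec{L}_k(\range(\rho_b\otimes\rho_{\bar\alpha}))$. Their underlying spaces sum to $\range(\rho_\alpha)\otimes\range(\rho_{\bar\alpha})\ni\ket{\psi_0}$, giving domination of the join; and for any $k$-set $j$ containing $\alpha$, the $j$-th component of each piece lives inside (block of $\supp\rho_a$ or $\supp\rho_b$)$\,\otimes\,$(rest), while the $j$-th reduced space of $\ket{\psi_0}$ traces down to the full $\range(\rho_\alpha)$ on particle $\alpha$, so neither piece alone can contain it. The non-domination is thus forced by the support split rather than hoped for from genericity. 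If you want to keep your perturbative framing, you would need to replace ``generic'' by an explicit rank-preserving rotation of one Schmidt block — at which point you have essentially rediscovered the support-splitting argument.
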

\begin{proof}
One just needs to prove that for any non-product state $\ket{\psi_0}$, the $k$-RSs $\vec{L}_k(\Span(\ket{\psi_0}))$
is not join prime, as the case of $\vec{L}_k(\mathcal{S})$ for any $\mathcal{S}\in\Theta$ follows a similar proof.

Since $\ket{\psi_0}$ is not a product state, there exists a particle
$\alpha$ which is entangled with the other $n-1$ particles. Without
loss of generality, we choose that $\alpha$ is just the first
particle. For $\Span(\ket{\psi_0})$, consider the maximally mixed
state of the $(n-1)$-RS of particles $\{2,3,\ldots, n\}$, and denote
it by $\rho_{\bar{\alpha}}$. And denote the maximally mixed state of
the  $1$-RS of the first particle particle by $\rho_{\alpha}$. We
know $\rho_{\alpha}$ is at least of rank 2, and let us write
$\rho_{\alpha}$ as
$\rho_{\alpha}=\frac{1}{r}\sum_{i=1}^{r}\ket{\alpha_i}\bra{\alpha_i}=\rho_a+\rho_b$,
where $\rho_a=\sum_{i=1}^{m}\ket{\alpha_i}\bra{\alpha_i}$
$\rho_b=\sum_{i=m+1}^{r}\ket{\alpha_i}\bra{\alpha_i}$ for some
$1\leq m\leq r$. Then it is straightforward to show that
\begin{equation}
\vec{L}_k(\range(\rho_a\otimes\rho_{\bar{\alpha}}))\vee \vec{L}_k(\range(\rho_b\otimes\rho_{\bar{\alpha}}))\geq \vec{L}_k(\Span(\ket{\psi_0}))
\end{equation}
That is, the
join of the $k$-RSs of $\range(\rho_a\otimes\rho_{\bar{\alpha}})$
and the $k$-RSs of $\range(\rho_b\otimes\rho_{\bar{\alpha}})$ contain the $k$-RSs of $\Span(\ket{\psi_0})$.
However, apparently $\vec{L}_k(\Span(\ket{\psi_0}))\nleq \vec{L}_k(\range(\rho_a\otimes\rho_{\bar{\alpha}}))$
and $\vec{L}_k(\Span(\ket{\psi_0}))\nleq \vec{L}_k(\range(\rho_b\otimes\rho_{\bar{\alpha}}))$.
Therefore the $k$-RSs $\vec{L}_k(\Span(\ket{\psi_0}))$ is not join prime.
\end{proof}

Finally we discuss join irreducibles of $\Theta_k$. We know that any atom is also a join irreducible, however the reverse is generally not true. This is indeed the case for $\Theta_k$, as shown by the following example.
\begin{example}
\label{eq:irr}
Consider a three-qubit system with qubits $A,B,C$, and the three qubit $W$ state $\ket{W}=\frac{1}{\sqrt{3}}(\ket{001}+\ket{010}+\ket{100})$.
Then consider the following $\vec{L}_2(W)=\vec{L}_2(\Span(\ket{W}))\in\Theta_2$.
\begin{eqnarray}
&&\vec{L}_2(W)=(\Span\{\ket{00},\frac{1}{\sqrt{2}}(\ket{01}+\ket{10})\},\nonumber\\
&&\Span\{\ket{00},\frac{1}{\sqrt{2}}(\ket{01}+\ket{10})\},\Span\{\ket{00},\frac{1}{\sqrt{2}}(\ket{01}+\ket{10})\}).
\end{eqnarray}
We now show that $\vec{L}_2(W)$ is a join irreducible. If not, then there exist $\vec{L}'_2,\vec{L}''_2\in\Theta_2$ such that
$\vec{L}_2(W)=\vec{L}'_2\vee\vec{L}''_2$ but $\vec{L}_2\neq\vec{L}'_2$ and $\vec{L}_2\neq\vec{L}''_2$.
Then $\vec{L}'_2$ and $\vec{L}''_2$ must be of the forms
\begin{eqnarray}
&&\vec{L}'_2=(\Span\{\alpha_{1}'\ket{00}+\beta'_{1}\frac{1}{\sqrt{2}}(\ket{01}+\ket{10})\},\nonumber\\
&&\Span\{\alpha_{2}'\ket{00}+\beta'_{2}\frac{1}{\sqrt{2}}(\ket{01}+\ket{10})\},\nonumber\\
&&\Span\{\alpha_{3}'\ket{00}+\beta'_{3}\frac{1}{\sqrt{2}}(\ket{01}+\ket{10})\}).
\end{eqnarray}
and
\begin{eqnarray}
&&\vec{L}''_2=(\Span\{\alpha_{1}''\ket{00}+\beta''_{1}\frac{1}{\sqrt{2}}(\ket{01}+\ket{10})\},\nonumber\\
&&\Span\{\alpha_{2}''\ket{00}+\beta''_{2}\frac{1}{\sqrt{2}}(\ket{01}+\ket{10})\},\nonumber\\
&&\Span\{\alpha''_{3}\ket{00}+\beta''_{3}\frac{1}{\sqrt{2}}(\ket{01}+\ket{10})\}).
\end{eqnarray}
However, it is straightforward to check if $\vec{L}'_2,\vec{L}''_2\in\Theta_2$, then one must have $\beta'_{1}=\beta''_{1}=\beta'_{2}=\beta''_{2}=\beta'_{3}=\beta''_{3}=0$. Therefore $\vec{L}_2$ is a join irreducible.

To show that $\vec{L}_2(W)$ is not an atom, note that the atom
\begin{equation}
(\Span\{\ket{00}\},\Span\{\ket{00}\},\Span\{\ket{00}\}).
\end{equation}
is contained in $\vec{L}_2(W)$.
\end{example}

Note that this example also shows that $\vec{L}_2(W)$ is not a join of atoms, meaning that elements in
$\Theta_k$ may not be join of atoms. Elements in $\Theta_k$ which are not join irreducibles can be written as join of other two elements. Then a natural question is whether
any element in $\Theta_k$ can be written as finite join of join irreducibles. This question can indeed be answered by borrowing some general theory of semilattices.
We first need the concept of chains. For any partially ordered set, a chain is just any totally ordered subset. Then we need
to the following `descending chain condition'.
\begin{definition}
A partially ordered set $S$ satisfies the descending chain condition if there does not exist an infinite descending chain $s_1>s_2,\ldots$ of elements of $S$.
\end{definition}
\begin{theorem}~\cite{DP02,Cohn02}
\label{th:str}
If a semilattice satisfies the descending chain condition, then every element can be expressed as a finite join of join irreducible elements.
\end{theorem}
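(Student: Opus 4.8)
The plan is to prove this by the standard minimal-counterexample argument, which is precisely the device the descending chain condition is designed to enable. Let $S$ be a semilattice satisfying the descending chain condition and let $X\subseteq S$ be the set of elements that \emph{cannot} be written as a finite join of join irreducibles; the goal is to show $X=\emptyset$. Suppose not. The descending chain condition implies that every nonempty subset of $S$ possesses a minimal element (otherwise one could build an infinite strictly descending chain one step at a time), so fix a minimal $x\in X$.

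The next step is to observe that $x$ itself cannot be join irreducible: a join irreducible element is trivially a finite join consisting of a single join irreducible term, so it would not belong to $X$. Hence, by the definition of join irreducibility, we may write $x=a\vee b$ with $a\neq x$ and $b\neq x$. Since $a\leq a\vee b=x$ and $a\neq x$, we have $a<x$, and likewise $b<x$. By the minimality of $x$ in $X$, neither $a$ nor $b$ lies in $X$, so each admits a finite join decomposition into join irreducibles, say $a=\bigvee_i p_i$ and $b=\bigvee_j q_j$ with every $p_i,q_j$ join irreducible. Then $x=a\vee b=\bigl(\bigvee_i p_i\bigr)\vee\bigl(\bigvee_j q_j\bigr)$ is a finite join of join irreducibles, contradicting $x\in X$. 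Therefore $X=\emptyset$, which is the assertion of the theorem.

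I do not expect a genuine obstacle here; the only subtlety worth flagging is the passage from the descending chain condition to the existence of minimal elements in arbitrary nonempty subsets, which in full generality invokes a weak choice principle, but which is automatic for $\Theta_k$ because the sum of the dimensions of the components of a $k$-RS is a nonnegative integer that strictly decreases along any strictly descending chain, forcing minimal elements to exist. The other point to treat carefully is purely bookkeeping: reading ``finite join'' so as to permit a single term, which is what legitimizes the base case in which $x$ is already join irreducible. Beyond these two remarks, no hard step remains.
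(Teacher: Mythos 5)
Your argument is correct: it is the standard minimal-counterexample (Noetherian induction) proof enabled by the descending chain condition, and the two subtleties you flag (one-term joins as the base case, and the dependent-choice step from DCC to existence of minimal elements, which is harmless here since $\Theta_k$ is finite-dimensional) are exactly the right ones. The paper itself gives no proof, citing the result from the order-theory literature, and your argument is precisely the one found in those sources, so there is nothing to compare beyond noting agreement.
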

\begin{corollary}
\label{cor:str}
Every element in $\Theta_k$ can be expressed as a finite join of join irreducible elements.
\end{corollary}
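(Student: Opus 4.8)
The plan is to derive Corollary~\ref{cor:str} as an immediate application of Theorem~\ref{th:str}. Since $\Theta_k$ has already been identified as a join-semilattice, with order $\le$ given by componentwise subspace inclusion, the only thing left to supply is a verification that $\Theta_k$ satisfies the descending chain condition; Theorem~\ref{th:str} then does the rest, expressing every element as a finite join of join irreducibles.

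To check the descending chain condition I would use a simple dimension count. Let $d$ denote the local Hilbert-space dimension, so that each of the $m={n\choose k}$ components of a $k$-RS $\vec{L}_k=(\eta_1,\ldots,\eta_m)$ is a subspace of a $k$-particle space of dimension $d^k$. Attach to each $\vec{L}_k\in\Theta_k$ the non-negative integer $N(\vec{L}_k)=\sum_{j=1}^m \dim\eta_j$. If $\vec{L}'_k<\vec{L}_k$, then by the definition of the order on $\Theta_k$ one has $\eta'_j\subseteq\eta_j$ for every $j$, with $\eta'_j\subsetneq\eta_j$ for at least one $j$; hence $N(\vec{L}'_k)<N(\vec{L}_k)$. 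Consequently any strictly descending chain $\vec{L}_k^{(1)}>\vec{L}_k^{(2)}>\cdots$ in $\Theta_k$ induces a strictly decreasing sequence of non-negative integers $N(\vec{L}_k^{(1)})>N(\vec{L}_k^{(2)})>\cdots$, which must terminate. (Since each $\eta_j$ is the range of a $k$-RDM of a nonzero density operator it is at least one-dimensional, so $m\le N(\vec{L}_k)\le m\,d^k$ and every chain has length at most $m(d^k-1)+1$.) This is precisely the descending chain condition, and Theorem~\ref{th:str} now yields the corollary.

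I do not expect a genuine obstacle here; the argument is essentially bookkeeping. The one point that deserves a sentence of care is that the order-theoretic relation $<$ on $\Theta_k$ is literally ``$\subseteq$ in every component and $\subsetneq$ in some component'', so that finite-dimensionality of the ambient $k$-particle spaces is exactly what forces descending chains to be finite. It is also worth remarking, to preempt confusion, that although $\Theta_k$ has no zero element (Theorem~\ref{th:semigroup}), this does not interfere with the application of Theorem~\ref{th:str}: that theorem expresses each element as a \emph{nonempty} finite join, so the absence of a bottom element, which would correspond to an empty join, is irrelevant.
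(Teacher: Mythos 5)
Your proposal is correct and follows the same route as the paper: verify that $\Theta_k$ satisfies the descending chain condition because its elements are tuples of finite-dimensional subspaces, then invoke Theorem~\ref{th:str}. Your explicit dimension-counting function $N(\vec{L}_k)$ just makes precise what the paper's proof states more tersely.
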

\begin{proof}
For $\Theta_k$ with respect to an $n$-particle system whose Hilbert
space is finite-dimensional, elements in $\Theta_k$ are just finite
dimensional  vector spaces, where one could not have any infinite
descending chains of subspaces. Therefore the semilattice $\Theta_k$
satisfies the descending chain condition. Consequently,
Theorem~\ref{th:str} applies.
\end{proof}

It should be pointed out that the characterizations of atoms and
join irreducible elements in $\Theta_k$ need further investigation.
For instance, the pre-images of atoms under the map
$\mathcal{\vec{L}}_k$ seem interesting, and actually we conjecture
that each of them always corresponds to one single point in $\D_k$,
but this has not been proved. Another example is related to the gap
between atoms and join irreducible elements. In
Example~\ref{eq:irr}, we demonstrated that one join irreducible
element is possible not to be an atom. Instead, it has a
two-dimensional pre-image, and one subspace of this pre-image
corresponds to a real atom. About this example, two natural
questions that might be asked are: for a non-atom join irreducible
element, must the pre-image under the map $\mathcal{\vec{L}}_k$ be
two-dimensional? Can we find a non-atom join irreducible element
such that it is bigger than two or more elements? In a word, the
semilattice structure of $\Theta_k$ needs more exploration.

\section{The Convex Structure of $\Theta_k$}
\label{sec:convex}

In Sec.~\ref{sec:semilattice}, we have analyzed the semilattice structure of $\Theta_k$ in detail. Given the discussion in Sec.~\ref{sec:Dk} that the set of all $k$-RDMs, namely $\D_k$, is a convex set, it will be nice if there could be a way to view the semilattice structure of $\Theta_k$ as some kind of convex structure. Indeed,
there is a general theory of convex structure, called the theory of convex spaces~\cite{Fri09}. This theory unifies aspects of convex geometry with aspects of order theory (i.e. semilattices). It has two extreme cases: 1. convex spaces of geometric type, which is related to the structure of $\D_k$; 2. convex spaces of combinatorial type, which is related to the structure of $\Theta_k$. We start from the definition of convex spaces.
\begin{definition}
~\cite{Fri09,Fri09a} A convex space is a set $C$ together with a
family of binary operations
\begin{equation}
cc_{\alpha}:\ C\times C\rightarrow C,\ \alpha\in (0,1)
\end{equation}
satisfying the the conditions of:
\begin{enumerate}
\item Idempotency: $cc_{\alpha}(x,x)=x$.
\item Parametric commutativity: $cc_{\alpha}(x,y)=cc_{1-\alpha}(y,x)$.
\item Deformed parametric associativity: $cc_{\alpha}(cc_{\mu}(x,y),z)=cc_{\tilde{\alpha}}(x,cc_{\tilde{\mu}}(y,z))$,
\end{enumerate}
where
\begin{equation}
\widetilde{\alpha}=\alpha\mu,\qquad\widetilde{\mu}=\left\{\begin{array}{cl}\frac{\alpha(1-\mu)}{\:\stackrel{}{(1-\alpha\mu)}\:}&\textrm{ if }\alpha\mu\neq 1\\\textrm{arbitrary}&\textrm{ if }\alpha=\mu=1.\end{array}\right.
\end{equation}
\end{definition}
One can write $cc_{\alpha}=\alpha x+(1-\alpha)y$ as the usual notation for convex
combinations (the convex space of geometric type). The convex space of combinatorial type is defined as follows.
\begin{definition}
~\cite{Fri09,Fri09a}
\label{df:com}
A convex space $C$ is said to be of combinatorial type whenever all convex combinations
\begin{equation}
\alpha x+(1-\alpha)y
\end{equation}
is independent of $\alpha$.
\end{definition}

It turns out~\cite{Fri09a}, a convex space of combinatorial type is nothing but a set $C$ together with a single binary operation $\vee:\ C\times C\rightarrow C$, which is idempotent, commutative, and associative.  By defining an order structure $x\geq y\ \leftrightarrow\ x=x\vee y$, it can be directly verified that such a $C$ is nothing but a join-semilattice. Therefore, in the language of abstract convexity, $\Theta_k$ is a convex space of combinatorial type.

Given Definition~\ref{df:com} one can define extreme points for convex spaces
of combinatorial type, similar as the extremes points for the usual convex structure.
\begin{definition}
For a convex space $C$ of combinatorial type, $x\in C$ is an extreme
point if there does not exist two different non-zero elements
$y,z\in C$ such that $x=y+z$.
\end{definition}
In the order theory language, for a semilattice $L$, $x\in L$
corresponds to an extreme point if there does not exists non-zero
elements $y,z\in L$ such that $x=y\vee z$. It is straightforward to
see that these extreme points are nothing but atoms in $L$, as if
$x$ is an atom, then $x\neq a\vee b$ for any $a,b\in L$, which means
$x$ is an extreme point; if $x$ is not an atom, then an element $x'$
different from $x$ exists such that $x=x\vee x'$, thus $x$ is not an
extreme point.

That is, in the language of abstract convexity, atoms are proper analogs of extreme points. Similar as the study
of extreme points in $\mathcal{D}_k$, it is important to discuss the properties of atoms of $\Theta_k$.
Besides the properties we have already discussed in Sec.~\ref{sec:semilattice}, here we would like to further examine one more aspect
regarding atoms of $\Theta_k$.
We know that any point in $\mathcal{D}_k$ can always be expressed as a weighted sum of extreme points of $\mathcal{D}_k$.
That is, extreme points are `building blocks' of $\mathcal{D}_k$. However, this is not true for atoms in $\Theta_k$. That is,
there exists elements in $\Theta_k$ that cannot be expressed as a sum of atoms. An example is provided in Example~\ref{eq:irr}, where $\vec{L}_2(W)\in\Theta_2$ is such an element. This kind of examples indicate that the structure of $\Theta_k$ is in a sense richer than the structure of $\mathcal{D}_k$. Understanding the atoms of $\Theta_k$, although very important to the understanding of the structure of $\Theta_k$, is still not enough to fully characterize $\Theta_k$. Indeed, as it is shown in Corollary~\ref{cor:str} that every element in $\Theta_k$ can be expressed as a finite join of join irreducible elements, one will also need to understand the properties of join irreducibles of $\Theta_k$.

\section{Ground-State Spaces of Frustration-Free Hamiltonians}
\label{sec:GSS}

In Sec.~\ref{sec:Dk}, we discussed the relationship regarding the
geometry of $\mathcal{D}_k$ and the ground-state spaces of $k$-local
Hamiltonians, obtained by considering the dual core of
$\mathcal{D}_k$, which are nothing but the set of all the $k$-local
Hamiltonians. To relate the structure of $\Theta_k$ to the
ground-state spaces of $k$-local FF Hamiltonians, we will also
introduce a dual theory. We show that the set of all $k$-local FF
Hamiltonians form a `dual semilattice' of $\Theta_k$, called a
meet-semilattice.

A $k$-local Hamiltonian $H=\sum_j H_j$ with each $H_j$ acting on at
most $k$-particles, is FF if the ground states of $H$ are also the
ground states of each $H_j$. In this paper, we only focus on the
ground-state space properties of $H_j$, but not any excited state
properties. In this case, one can replace each $H_j$ with a
projector $\Pi_j$, where $\Pi_j$ is the projection onto the
orthogonal space of the ground-state space of $H_j$~\cite{Bra06}. We
then put the total $m={n\choose k}$ $k$-particle projectors in a
fixed order and write $H$ as a vector
\begin{equation}
\vec{H}_k=(\Pi_1,\Pi_2,\ldots,\Pi_m),
\end{equation}
where the subscript $k$ means that $H$ is $k$-local. The
ground-state space of $\vec{H}_k$ is then given by
\begin{equation}
\bigcap_j(\Pi_j^{\perp}\otimes \iota_{\bar{j}}),
\end{equation}
where $\Pi_j^{\perp}$ is the ground-state space of $\Pi_j$.

Note that even if $\vec{H}_k$ is frustration-free, it is not necessarily true that the vector $\vec{H}^{\perp}_k=(\Pi^{\perp}_1,\Pi^{\perp}_2,\ldots,\Pi^{\perp}_m)\in\Theta_k$.
However in the rest of the paper, we only consider the case where $\vec{H}^{\perp}_k\in\Theta_k$. Indeed,
for any $\vec{L}_k=(\eta_1,\eta_2,\dots,\eta_m)\in\Theta_k$, $\vec{L}^{\perp}_k=(\eta^{\perp}_1,\eta^{\perp}_2,\dots,\eta^{\perp}_m)$
is a $k$-local FF Hamiltonian. Here $\eta^{\perp}_i$ is interpreted as a projection, which is just the projection onto the orthogonal space of $\eta_i$. In this sense, $k$-local FF Hamiltonians are duals of the elements in $\Theta_k$. We discuss an example.
\begin{example}
\label{eq:H}
Consider a three-qubit system with qubits $A,B,C$, and a $2$-local FF Hamiltonian $\vec{H}_2$ with
\begin{equation}
\vec{H}^{\perp}_2=(\Span\{\ket{00},\frac{1}{\sqrt{2}}(\ket{01}+\ket{10})\},\Span\{\ket{00}\},\Span\{\ket{00}\}).
\end{equation}
$\vec{H}^{\perp}_2$ is not in $\Theta_2$ as there does not exist any
subspace of three qubits whose image under the map
$\mathcal{\vec{L}}_2$ gives $\vec{H}^{\perp}_2$. However, the ground
space of $\vec{H}_2$ is a single product state $\ket{000}$, whose
image under the map $\mathcal{\vec{L}}_2$ gives
\begin{equation}
\vec{L}_2(\ket{000})=(\Span\{\ket{00}\},\Span\{\ket{00}\},\Span\{\ket{00}\})\in\Theta_2,
\end{equation}
so one can replace $\vec{H}_2$ by $\vec{L}^{\perp}_2(\ket{000})$, which has the same ground-state space as that of $\vec{H}_2$.
\end{example}

In general, if $\vec{H}_k$ is FF but $\vec{H}^{\perp}_k$ is not in
$\Theta_k$, then there does not exist any $n$-particle state whose
image under the map $\mathcal{\vec{L}}_k$ gives $\vec{H}^{\perp}_k$.
However, take the image of the ground-state space of $\vec{H}_k$
under the map $\mathcal{\vec{L}}_k$ will result in an element
$\vec{L}_k$ in $\Theta_k$, such that $\vec{L}_k\leq
\vec{H}^{\perp}_k$. Therefore the ground-state space of $\vec{H}_k$
is also the ground-state space of $\vec{L}^{\perp}_k$. In this
sense, one can just replace $\vec{H}_k$ by $\vec{L}^{\perp}_k$ for
our discussion of ground-state space of $k$-local FF Hamiltonians.

Now for an $n$-particle system, define a set of the $k$-local FF Hamiltonians:
\begin{equation}
\Upsilon_k=\{\vec{H}_k|\vec{H}^{\perp}_k\in\Theta_k\}.
\end{equation}
One can define a binary operation $\wedge$ for any $\vec{H}_k,\vec{H}'_k\in\Upsilon_k$:
\begin{equation}
\vec{H}_k\wedge\vec{H}'_k=((\Pi^{\perp}_1\cap\Pi'^{\perp}_1)^{\perp},(\Pi^{\perp}_2\cap\Pi'^{\perp}_2)^{\perp},\ldots,(\Pi^{\perp}_m\cap\Pi'^{\perp}_m)^{\perp}),
\end{equation}
where the operation $\cap$ on two vector spaces is the usual
intersection of vector spaces, and
$(\Pi^{\perp}_1\cap\Pi'^{\perp}_1)^{\perp}$ is the projection with
ground-state space being $\Pi^{\perp}_1\cap\Pi'^{\perp}_1$. It is
then straightforward to check that with the binary operation
$\wedge$, $\Upsilon_k$ form a meet-semilattice. For the definition
of meet-semilattice, one can just replace the binary operation
$\vee$ by $\wedge$ in the definition of join-semilattice given by
Definition~\ref{def:joinsemi}.

Given this duality between elements in $\Theta_k$ and $k$-local FF
Hamiltonians, we are ready to discuss the relationship between
elements in $\Theta_k$ and ground-state spaces of $k$-local FF
Hamiltonians. Recall that for an $n$-particle subspace
$\mathcal{S}$, in general the pre-image of $\vec{L}_k(\mathcal{S})$
under the map $\mathcal{\vec{L}}_k$ contains $\mathcal{S}$. We then
have the following proposition when they are equal.
\begin{proposition}
The MPI of $\vec{L}_k(\mathcal{S})$ under the map
$\mathcal{\vec{L}}_k$ equals $\mathcal{S}$ if and only if
$\mathcal{S}$ is the ground-state space of a $k$-local FF
Hamiltonian.
\end{proposition}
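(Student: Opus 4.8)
The plan is to prove both directions by combining two facts. First, for $\vec{L}_k(\mathcal{S})=(\eta_1,\dots,\eta_m)$ the MPI is the subspace $\bigcap_j \eta_j\otimes\iota_{\bar{j}}$; second, by the lemma of Sec.~\ref{sec:kRS} this subspace is the maximal pre-image of $\vec{L}_k(\mathcal{S})$ under $\mathcal{\vec{L}}_k$, so in particular it always contains $\mathcal{S}$. Hence the nontrivial content of the proposition is a criterion for the reverse inclusion $\bigcap_j \eta_j\otimes\iota_{\bar{j}}\subseteq\mathcal{S}$.

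For the ``only if'' direction I would argue as follows. Assume the MPI of $\vec{L}_k(\mathcal{S})$ equals $\mathcal{S}$. Since $\vec{L}_k(\mathcal{S})\in\Theta_k$, its dual $\vec{L}^{\perp}_k(\mathcal{S})=(\eta^{\perp}_1,\dots,\eta^{\perp}_m)$ is a $k$-local FF Hamiltonian, as recorded in Sec.~\ref{sec:GSS}. Its ground-state space is $\bigcap_j((\eta^{\perp}_j)^{\perp}\otimes\iota_{\bar{j}})=\bigcap_j(\eta_j\otimes\iota_{\bar{j}})$, which is exactly the MPI of $\vec{L}_k(\mathcal{S})$ and therefore equals $\mathcal{S}$ by hypothesis. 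So $\mathcal{S}$ is the ground-state space of a $k$-local FF Hamiltonian, namely $\vec{L}^{\perp}_k(\mathcal{S})$.

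For the ``if'' direction, suppose $\mathcal{S}$ is the ground-state space of a $k$-local FF Hamiltonian $\vec{H}_k=(\Pi_1,\dots,\Pi_m)$, so that $\mathcal{S}=\bigcap_j(\Pi^{\perp}_j\otimes\iota_{\bar{j}})$ and, by frustration-freeness, $\mathcal{S}\subseteq\Pi^{\perp}_j\otimes\iota_{\bar{j}}$ for every $j$. Let $\rho_M$ be the maximally mixed state of $\mathcal{S}$, so $\range(\rho_M)=\mathcal{S}$. Running the spectral-decomposition argument from the proof of Proposition~1 --- now with eigenvectors $\ket{\psi_\alpha}\in\mathcal{S}\subseteq\Pi^{\perp}_j\otimes\iota_{\bar{j}}$, so that the partial overlaps $\iprod{\phi_\beta}{\psi_\alpha}$ lie in $\Pi^{\perp}_j$ --- gives $\eta_j=\range(\gamma_{M,j})\subseteq\Pi^{\perp}_j$ for each $j$. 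Hence the MPI of $\vec{L}_k(\mathcal{S})$ satisfies $\bigcap_j(\eta_j\otimes\iota_{\bar{j}})\subseteq\bigcap_j(\Pi^{\perp}_j\otimes\iota_{\bar{j}})=\mathcal{S}$, and combined with the general inclusion $\mathcal{S}\subseteq\bigcap_j(\eta_j\otimes\iota_{\bar{j}})$ we get equality.

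The step I expect to be the main obstacle is the ``if'' direction, because one must pass from an \emph{arbitrary} $k$-local FF Hamiltonian --- whose $\vec{H}^{\perp}_k$ need not even lie in $\Theta_k$, as Example~\ref{eq:H} shows --- to the intrinsically defined object $\vec{L}_k(\mathcal{S})$. The bridge is exactly the implication that frustration-freeness forces each local ground-space $\Pi^{\perp}_j$ to contain the $j$-th reduced space $\eta_j$; once this is established, everything else is routine manipulation of the formula $\bigcap_j\eta_j\otimes\iota_{\bar{j}}$ for the MPI.
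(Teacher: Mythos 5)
Your proof is correct, and the ``only if'' direction is exactly the paper's: the MPI of $\vec{L}_k(\mathcal{S})$ is by construction $\bigcap_j \eta_j\otimes\iota_{\bar{j}}$, which is the ground-state space of the dual Hamiltonian $(\eta_1^{\perp},\dots,\eta_m^{\perp})$. Where you diverge is the ``if'' direction. The paper invokes its standing convention that the FF Hamiltonian lies in $\Upsilon_k$ (i.e.\ $\vec{H}_k^{\perp}\in\Theta_k$) and then simply asserts that the MPI of $\vec{H}_k^{\perp}$ is $\mathcal{S}$; the transfer from $\vec{H}_k^{\perp}$ to the intrinsically defined $\vec{L}_k(\mathcal{S})$ is left implicit (it is sketched, without proof, in the discussion preceding the proposition, where the paper states $\vec{L}_k\leq\vec{H}_k^{\perp}$). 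You instead prove that inequality directly: rerunning the spectral-decomposition argument of Proposition~1 with $\ket{\psi_\alpha}\in\Pi_j^{\perp}\otimes\iota_{\bar{j}}$ gives $\eta_j\subseteq\Pi_j^{\perp}$, hence $\bigcap_j\eta_j\otimes\iota_{\bar{j}}\subseteq\bigcap_j\Pi_j^{\perp}\otimes\iota_{\bar{j}}=\mathcal{S}$, and the maximality of the MPI supplies the reverse inclusion. This buys you two things: the argument works for an \emph{arbitrary} $k$-local FF Hamiltonian, including those with $\vec{H}_k^{\perp}\notin\Theta_k$ as in Example~\ref{eq:H}, so no appeal to the $\Upsilon_k$ restriction is needed; and it makes explicit the sandwich that the paper's one-line assertion glosses over. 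The paper's route is shorter but rests on the unproven (though plausible) claim that membership in $\Upsilon_k$ forces the MPI of $\vec{H}_k^{\perp}$ to coincide with the ground space; yours is self-contained.
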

\begin{proof}
This can be readily proved by the duality property. First of all,
the MPI of any element
$\vec{L}_k=(\eta_1,\eta_2,\dots,\eta_m)\in\Theta_k$ is obviously a
ground-state space of a $k$-local FF Hamiltonians $\vec{H}_k$, and
the Hamiltonian is given by
$\vec{H}_k=(\eta^{\perp}_1,\eta^{\perp}_2,\dots,\eta^{\perp}_m)$ as
discussed above. Therefore, if the MPI of $\vec{L}_k(\mathcal{S})$
under the map $\mathcal{\vec{L}}_k$ equals $\mathcal{S}$, then
$\mathcal{S}$ is the ground-state space of the $k$-local FF
Hamiltonians $\vec{H}_k$.

For the reverse, if $\mathcal{S}$ is a ground-state space of a $k$-local FF Hamiltonian $\vec{H}_k=(\Pi_1,\Pi_2,\ldots,\Pi_m)\in\Upsilon_k$, then $\vec{H}^{\perp}_k=(\Pi^{\perp}_1,\Pi^{\perp}_2,\ldots,\Pi^{\perp}_m)\in\Theta_k$, and the MPI of $\vec{H}^{\perp}_k$ is just $\mathcal{S}$.
\end{proof}

Now we further relate ground-state spaces of $k$-local FF Hamiltonians to subsemilattices of $\Theta_k$.
\begin{proposition}
\label{pro:sub} The MPIs of a subsemilattice of $\Theta_k$ under the
map $\mathcal{\vec{L}}_k$ is a ground-state space of some $k$-local
FF Hamiltonian. The reverse is also true, that is, all the subspaces
of a ground-state space of some $k$-local FF Hamiltonian map to a
subsemilattice of $\Theta_k$ under $\mathcal{\vec{L}}_k$.
\end{proposition}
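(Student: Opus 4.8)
The plan is to prove the two implications separately, both resting on the duality established above, on the preceding proposition (the MPI of $\vec{L}_k(\mathcal{S})$ equals $\mathcal{S}$ exactly when $\mathcal{S}$ is the ground-state space of a $k$-local FF Hamiltonian), and on one preliminary observation that I would record first: the map $\mathcal{\vec{L}}_k$ sends joins of subspaces to joins in $\Theta_k$, i.e.
$\vec{L}_k(\mathcal{S}_1+\mathcal{S}_2)=\vec{L}_k(\mathcal{S}_1)+\vec{L}_k(\mathcal{S}_2)=\vec{L}_k(\mathcal{S}_1)\vee\vec{L}_k(\mathcal{S}_2)$.
This follows from the proposition characterizing $\vec{L}_k(\mathcal{S})$ through the maximally mixed state: there each component $\eta_j$ is the span of the partial inner products $\iprod{\phi_\beta}{\psi}$ with $\ket{\psi}$ ranging over $\mathcal{S}$ and $\ket{\phi_\beta}$ over a basis of the $n-k$ complementary particles, and since a spanning set of $\mathcal{S}_1+\mathcal{S}_2$ is the union of spanning sets of $\mathcal{S}_1$ and $\mathcal{S}_2$ while $\iprod{\phi_\beta}{\cdot}$ is linear, the span splits as asserted. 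In particular $\mathcal{S}_1\subseteq\mathcal{S}_2$ forces $\vec{L}_k(\mathcal{S}_1)\le\vec{L}_k(\mathcal{S}_2)$, so $\mathcal{\vec{L}}_k$ is order-preserving, and the MPI $\bigcap_j\eta_j\otimes\iota_{\bar j}$ is manifestly order-preserving in $(\eta_j)$ as well.

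For the reverse direction, let $\mathcal{G}$ be the ground-state space of a $k$-local FF Hamiltonian. By the preceding proposition the MPI of $\vec{L}_k(\mathcal{G})$ equals $\mathcal{G}$. Put $T:=\{\vec{L}_k(\mathcal{S})\mid\mathcal{S}\subseteq\mathcal{G}\}\subseteq\Theta_k$, which is nonempty since $\mathcal{G}$ itself is a subspace. For any $\mathcal{S}_1,\mathcal{S}_2\subseteq\mathcal{G}$ we have $\mathcal{S}_1+\mathcal{S}_2\subseteq\mathcal{G}$, and by the preliminary observation $\vec{L}_k(\mathcal{S}_1)\vee\vec{L}_k(\mathcal{S}_2)=\vec{L}_k(\mathcal{S}_1+\mathcal{S}_2)\in T$. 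Hence $T$ is closed under the (componentwise) join of $\Theta_k$, i.e. it is a subsemilattice of $\Theta_k$, with greatest element $\vec{L}_k(\mathcal{G})$ by order-preservation of $\mathcal{\vec{L}}_k$.

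For the forward direction, let $T\subseteq\Theta_k$ be a nonempty subsemilattice. Since the $n$-particle Hilbert space is finite dimensional, $\Theta_k$ satisfies the ascending chain condition: a strict step $\vec{L}<\vec{L}'$ strictly increases the total component-dimension $\sum_j\dim\eta_j$, which is bounded above, so chains have bounded length. Thus $T$ has a maximal element $\vec{L}^{*}$, and since $T$ is closed under join we get $\vec{L}\vee\vec{L}^{*}\in T$ for every $\vec{L}\in T$, whence maximality gives $\vec{L}\vee\vec{L}^{*}=\vec{L}^{*}$; so $\vec{L}^{*}$ is in fact the greatest element of $T$. By the duality above, the MPI of $\vec{L}^{*}$ is the ground-state space of the $k$-local FF Hamiltonian $(\vec{L}^{*})^{\perp}$. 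Finally, $\vec{L}\le\vec{L}^{*}$ for all $\vec{L}\in T$ together with order-preservation of the MPI shows that the MPI of each element of $T$ is a subspace of the MPI of $\vec{L}^{*}$, and (the top element contributing the whole space) the sum of the MPIs of all elements of $T$ equals the MPI of $\vec{L}^{*}$. Thus the MPIs of $T$ assemble exactly into the single ground-state space given by the MPI of $\vec{L}^{*}$.

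The only genuinely delicate point is obtaining a greatest element of an arbitrary, possibly infinite, subsemilattice $T$; this is where the ascending chain condition on $\Theta_k$ (used already in the proof of Corollary~\ref{cor:str}) is essential, and once it is in hand together with the join-homomorphism observation and the order-preservation of $\mathcal{\vec{L}}_k$ and of the MPI, both directions reduce to routine bookkeeping. I would also be careful, in the reverse direction, to note that $T$ is a genuine sub-object — closed under the ambient componentwise join of $\Theta_k$, not merely an abstract semilattice in its own right — which is immediate from the description of $\vee$ on $\Theta_k$.
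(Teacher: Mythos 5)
Your proof is correct and follows essentially the same route as the paper's: identify the greatest element of the subsemilattice, use the duality to recognize its MPI as the ground-state space of the dual $k$-local FF Hamiltonian, and for the reverse direction observe that every subspace of a ground-state space maps below the image of that ground-state space. You are in fact more careful than the paper, which asserts without argument both the existence of the greatest element (your chain-condition argument) and the closure of the image set under join (your observation that $\mathcal{\vec{L}}_k$ sends sums of subspaces to joins).
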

\begin{proof}
A subsemilattice $\mathcal{S}_k$ of $\Theta_k$ is closed under the
join operation, so it must have a largest element $\vec{L}_{M,k}$
such that for any element $\vec{L}_k\in\mathcal{S}_k$,
$\vec{L}_{k}\leq\vec{L}_{M,k}$. Then the MPI of $\vec{L}_{M,k}$
under the map $\mathcal{\vec{L}}_k$ must be the ground-state space
of the $k$-local FF Hamiltonian $\vec{L}_{M,k}^{perp}$. And the MPI
of $\vec{L}_{M,k}$ contains the MPI of any element
$\vec{L}_k\in\mathcal{S}_k$. Therefore the MPIs of $\mathcal{S}_k$
is the ground-state space of $\vec{L}_{M,k}^{\perp}$.

To show the reverse is true, given a $k$-local FF Hamiltonian
$\vec{H}_k$, for any subspace $\mathcal{S}$ of the ground-state
space, $\vec{L}_{k}(\cal{S})$ satisfies
$\vec{L}_{k}(\mathcal{S})\leq\vec{L}_{M,k}$, where $\vec{L}_{M,k}$
is the image of the ground-state space of $\vec{H}_k$ under the map
$\mathcal{\vec{L}}_k$.
\end{proof}
We discuss an example.
\begin{example}
\label{eg:subsemi}
We consider a three-qubit system with qubits $A,B,C$, and a subsemilattice of $\Theta_k$ which contains only two elements $\vec{L}_2$ and $\vec{L}'_2$.
\begin{eqnarray}
&&\vec{L}_2=(\text{span}\{\ket{00},\frac{1}{\sqrt{2}}(\ket{01}+\ket{10})\},\nonumber\\
&&\Span\{\ket{00},\frac{1}{\sqrt{2}}(\ket{01}+\ket{10})\},\text{span}\{\ket{00},\frac{1}{\sqrt{2}}(\ket{01}+\ket{10})\}),
\end{eqnarray}
and
\begin{equation}
\vec{L}'_2=(\text{span}\{\ket{00}\},\text{span}\{\ket{00}\},\text{span}\{\ket{00}\}),
\end{equation}
Apparently $\vec{L}_2$ is the largest element in the subsemilattice.
The pre-image of $\vec{L}_2$ is $\mathcal{S}=\text{span}\{\frac{1}{\sqrt{3}}(\ket{001}+\ket{010}+\ket{100}),\ket{000}\}$, which contains
the MPI of $\vec{L}'_2$ which is $\mathcal{S}'=\text{span}\{\ket{000}\}$. And $\mathcal{S}$ is the ground-state space of the
$2$-local FF Hamiltonian $\vec{L}^{\perp}_2$.
\end{example}

In the language of abstract convexity, these subsemilattices are
actually faces of $\Theta_k$. So Proposition~\ref{pro:sub} is an
analog of the relationship between exposed faces of $\D_k$ and
ground-state spaces of $k$-local Hamiltonians. Furthermore,
Proposition~\ref{pro:sub} shows that indeed for $\Theta_k$, all
faces are exposed. And as atoms are proper analogs of extreme
points, they are then those smallest faces of $\Theta_k$.

As discussed in Proposition~\ref{pro:pure}, if the MPI of an element
$\vec{L}_k\in\Theta_k$ is a single pure state $\ket{\psi}$, then
$\vec{L}_k$ is an atom. In this case, $\ket{\psi}$ is then a unique
ground state of the  $k$-local FF Hamiltonian $\vec{L}_k^{\perp}$.
The reverse is also true, that is, the image of a unique ground
state of a $k$-local FF Hamiltonian under the map
$\mathcal{\vec{L}}_k$ is an atom in $\Theta_k$. However, we do lack
a general understanding of the pre-images for atoms in $\Theta_k$,
so we would like to look further into this. We start from a
definition of `minimal' ground-state space for $k$-local FF
Hamiltonians.

\begin{definition}
A ground-state space of a $k$-local FF Hamiltonian is called minimal if it does not contain any proper subspaces which are ground-state spaces of some other $k$-local FF Hamiltonians.
\end{definition}

Based on this definition we have the following proposition for atoms.
\begin{proposition}
\label{pro:min} The MPI of an atom of $\Theta_k$ under the map
$\mathcal{\vec{L}}_k$ is a minimal ground-state space of $k$-local
FF Hamiltonians. The reverse is also true, that is, the image of a
minimal ground-state space of some $k$-local FF Hamiltonian under
the map $\mathcal{\vec{L}}_k$ is an atom of $\Theta_k$.
\end{proposition}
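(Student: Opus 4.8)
The plan is to derive both implications from the duality developed in this section, using two tools: (i) the \emph{monotonicity} of $\mathcal{\vec{L}}_k$, namely that $\mathcal{S}'\subseteq\mathcal{S}$ implies $\vec{L}_k(\mathcal{S}')\leq\vec{L}_k(\mathcal{S})$, which is immediate from the proposition in Sec.~\ref{sec:kRS} (a subspace of $\mathcal{S}$ only shrinks the ranges of the $k$-RDMs of its maximally mixed state); and (ii) the proposition above stating that $\mathrm{MPI}(\vec{L}_k(\mathcal{S}))=\mathcal{S}$ if and only if $\mathcal{S}$ is the ground-state space of a $k$-local FF Hamiltonian, together with the observation (from the proof of that proposition) that the MPI of any element of $\Theta_k$ is itself such a ground-state space, namely that of $\vec{L}_k^{\perp}$. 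Conceptually, $\mathcal{S}\mapsto\mathrm{MPI}(\vec{L}_k(\mathcal{S}))$ is a closure-type operator whose fixed subspaces are exactly the $k$-local FF ground-state spaces, and $\mathcal{\vec{L}}_k$ restricts to an order isomorphism between those fixed subspaces (ordered by $\subseteq$) and $\Theta_k$ (ordered by $\leq$), with inverse $\vec{L}_k\mapsto\mathrm{MPI}(\vec{L}_k)$; atoms then match atoms, which is the assertion. I would nevertheless spell out both directions by hand. Recall also that $\Theta_k$ has no zero element (Theorem~\ref{th:semigroup}), so ``atom'' simply means ``minimal element'' and every element of $\Theta_k$ is automatically nonzero.

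First I would show the MPI of an atom is a minimal ground-state space. Let $\vec{L}_k$ be an atom and $\mathcal{S}=\mathrm{MPI}(\vec{L}_k)$; since $\mathcal{S}$ is a pre-image of $\vec{L}_k$ we have $\vec{L}_k(\mathcal{S})=\vec{L}_k$, and $\mathcal{S}$ is the ground-state space of $\vec{L}_k^{\perp}$. Suppose $\mathcal{S}$ is not minimal: there is a proper subspace $\mathcal{S}'\subsetneq\mathcal{S}$ that is the ground-state space of some $k$-local FF Hamiltonian, so by the iff proposition $\mathrm{MPI}(\vec{L}_k(\mathcal{S}'))=\mathcal{S}'$. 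Monotonicity gives $\vec{L}_k(\mathcal{S}')\leq\vec{L}_k(\mathcal{S})=\vec{L}_k$. If these were equal, then $\mathcal{S}'=\mathrm{MPI}(\vec{L}_k(\mathcal{S}'))=\mathrm{MPI}(\vec{L}_k)=\mathcal{S}$, contradicting $\mathcal{S}'\subsetneq\mathcal{S}$; hence $\vec{L}_k(\mathcal{S}')<\vec{L}_k$ in $\Theta_k$, contradicting that $\vec{L}_k$ is an atom. Therefore $\mathcal{S}$ is minimal.

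Next I would prove the converse. Let $\mathcal{S}$ be a minimal ground-state space of a $k$-local FF Hamiltonian; by the iff proposition, $\mathrm{MPI}(\vec{L}_k(\mathcal{S}))=\mathcal{S}$. Suppose $\vec{L}_k(\mathcal{S})$ is not an atom: there exists $\vec{L}'_k\in\Theta_k$ with $\vec{L}'_k<\vec{L}_k(\mathcal{S})$. Put $\mathcal{S}'=\mathrm{MPI}(\vec{L}'_k)$; it is a nonempty subspace, it is a pre-image of $\vec{L}'_k$ so $\vec{L}_k(\mathcal{S}')=\vec{L}'_k$, and it is the ground-state space of $(\vec{L}'_k)^{\perp}$. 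Writing $\vec{L}'_k=(\eta'_j)$ and $\vec{L}_k(\mathcal{S})=(\eta_j)$ with $\eta'_j\subseteq\eta_j$ for all $j$, the intersection formula for the MPI gives $\mathcal{S}'=\bigcap_j\eta'_j\otimes\iota_{\bar{j}}\subseteq\bigcap_j\eta_j\otimes\iota_{\bar{j}}=\mathcal{S}$; moreover $\mathcal{S}'=\mathcal{S}$ would force $\vec{L}'_k=\vec{L}_k(\mathcal{S}')=\vec{L}_k(\mathcal{S})$, contradicting $\vec{L}'_k<\vec{L}_k(\mathcal{S})$, so $\mathcal{S}'\subsetneq\mathcal{S}$. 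This exhibits a proper $k$-local FF ground-state space inside $\mathcal{S}$, contradicting minimality. Hence $\vec{L}_k(\mathcal{S})$ is an atom.

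The main obstacle, indeed the only delicate point, is upgrading ``$\vec{L}_k(\mathcal{S}')\neq\vec{L}_k(\mathcal{S})$'' to the strict subspace inclusion ``$\mathcal{S}'\subsetneq\mathcal{S}$'' (and conversely). This rests precisely on the MPI being a genuine pre-image, so that $\mathcal{\vec{L}}_k$ recovers an element from its MPI, combined with the monotone intersection construction of the MPI; together these give the implication $\mathrm{MPI}(\vec{L}'_k)=\mathrm{MPI}(\vec{L}''_k)\Rightarrow\vec{L}'_k=\vec{L}''_k$, which is what makes the bookkeeping in both directions go through. A minor point to note in passing is that the definition of ``minimal ground-state space'' already refers only to subspaces fixed by $\mathrm{MPI}\circ\mathcal{\vec{L}}_k$, so no separate handling of Hamiltonians with $\vec{H}_k^{\perp}\notin\Theta_k$ is needed here; everything else is routine.
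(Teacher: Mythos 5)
Your proof is correct and follows essentially the same route as the paper's: both directions are argued by contradiction, using the characterization of $k$-local FF ground-state spaces as exactly the subspaces fixed by $\mathrm{MPI}\circ\mathcal{\vec{L}}_k$ together with monotonicity. You merely make explicit the step the paper leaves implicit, namely why non-equality of the subspaces forces strictness of the order relation in $\Theta_k$ (and vice versa), which is a faithful elaboration rather than a different argument.
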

\begin{proof}
If the MPI $\mathcal{S}$ of an atom of $\Theta_k$ under the map
$\mathcal{\vec{L}}_k$ is not a minimal ground space, then there
exists a proper subspace $\mathcal{S}'$ of $\mathcal{S}$ which is a
ground space of some other $k$-local FF Hamiltonian. Then
$\vec{L}_k(\mathcal{S}')<\vec{L}_k(\mathcal{S})$ which contradicts
the assumption that $\vec{L}_k(\mathcal{S})$ is an atom.

To show the reverse, if the image of a minimal ground-state space $\mathcal{S}$ is not
an atom, then there exists some $\vec{L}'_k<\vec{L}_k(\mathcal{S})$, such that the MPI of $\vec{L}'_k$ is a proper subspace of $\mathcal{S}$, which contradicts the assumption that $\mathcal{S}$ is a minimal ground-state space of $k$-local FF Hamiltonians.
\end{proof}
This proposition is very intuitive. The atoms are minimal elements in $\Theta_k$, so they correspond to minimal ground-state spaces of $k$-local FF Hamiltonians.
However, we still do not understand the detailed structure of the minimal ground-state spaces of $k$-local FF Hamiltonians.
Certainly, unique ground states of $k$-local FF Hamiltonians is a special case of minimal ground-state spaces. For the simplest case for $n$-qubit systems with $k=2$, we show that this is the indeed the only possible case, which is presented as the following theorem.

\begin{theorem}
For an $n$-qubit system, and $k=2$, the MPI of any atom
of $\Theta_2$ is the unique ground state of some $2$-local FF Hamiltonian.
\end{theorem}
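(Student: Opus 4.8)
The plan is to reduce the statement to a purely structural fact about $2$-local FF qubit Hamiltonians and then exploit the known tree-tensor-network description of their ground spaces. By Proposition~\ref{pro:min}, the MPI $\mathcal{S}$ of an atom of $\Theta_2$ is a minimal ground-state space of a $2$-local FF Hamiltonian, namely of $\vec{L}^{\perp}_2(\mathcal{S})$. Hence it suffices to prove that, for an $n$-qubit system, every minimal ground-state space of a $2$-local FF Hamiltonian is one-dimensional: once $\mathcal{S}=\Span\{\ket{\psi}\}$, the fact that the MPI is the \emph{maximal} pre-image of $\vec{L}_2(\mathcal{S})$ forces $\ket{\psi}$ to be the unique ground state of $\vec{L}^{\perp}_2(\mathcal{S})$, which is the claim (and the converse direction is Proposition~\ref{pro:pure}).

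Equivalently, assuming $\mathcal{S}$ is the ground space of a $2$-local FF qubit Hamiltonian with $\dim\mathcal{S}\ge 2$, I would produce a proper subspace $\mathcal{S}'\subsetneq\mathcal{S}$ that is again the ground space of some $2$-local FF Hamiltonian; this finishes the proof, because such an $\mathcal{S}'$ is its own MPI, so $\vec{L}_2(\mathcal{S}')$ and $\vec{L}_2(\mathcal{S})$ have distinct MPIs, and since $\vec{L}_2(\mathcal{S}')\le\vec{L}_2(\mathcal{S})$ by monotonicity of $\mathcal{\vec{L}}_2$ under inclusion, we obtain a nonzero $\vec{L}_2(\mathcal{S}')<\vec{L}_2(\mathcal{S})$, contradicting that $\vec{L}_2(\mathcal{S})$ is an atom. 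To build $\mathcal{S}'$ I would induct on $n$. The base case $n=2$ is immediate, since $\vec{L}_2(\Span\{\ket{\psi}\})=(\Span\{\ket{\psi}\})$ is strictly smaller than $\vec{L}_2(\mathcal{S})=(\mathcal{S})$ whenever $\dim\mathcal{S}\ge 2$. For the step, the reducible cases are easy: if $\mathcal{S}$ factorizes as a nontrivial tensor product over disjoint blocks of qubits --- in particular whenever the maximally mixed state of $\mathcal{S}$ has a rank-one single-qubit marginal, so that $\mathcal{S}=\ket{a}_i\otimes\mathcal{S}_0$ --- then some block carries dimension $\ge 2$, and the inductive hypothesis applied to that block (or, when the block is a single free qubit, pinning that qubit to one value) yields the required proper $2$-local FF ground subspace after tensoring back.

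The hard part will be the indecomposable case: $\mathcal{S}$ not a tensor product and every single-qubit marginal of full rank, with $\dim\mathcal{S}\ge 2$. Here I would invoke the structure theorem for ground spaces of $2$-local FF qubit Hamiltonians~\cite{Bra06,CCD10,JWZ10}, which represents $\mathcal{S}$ by a tree tensor network with bonds of dimension at most two, so that the degeneracy is carried along a single open bond. The goal is then to collapse that bond in a direction that decouples a leaf qubit $i$: to find a state $\ket{v}$ of qubit $i$ for which the subspace of $\mathcal{S}$ on which qubit $i$ equals $\ket{v}$ is nonzero and proper, and then adjoin the $1$-local (hence $2$-local) term $I-\oprod{v}{v}$ on qubit $i$. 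I anticipate two obstacles, both to be overcome using the tree structure. First, one must rule out that every state of an indecomposable $\mathcal{S}$ has every qubit entangled with the rest, which would block all such pinnings. Second, one must ensure the pinning genuinely shrinks at least one $2$-particle reduced space rather than merely carving out a proper subspace with identical reduced spaces: the $W$ state is the warning example, since its $2$-RSs coincide with those of $\Span\{\ket{W},\ket{000}\}$, so the pinning direction must land on the `classical' part of the degeneracy (the $\ket{000}$ direction), and it is precisely the tree-tensor-network form that makes such a direction available and locates it.
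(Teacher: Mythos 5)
Your reduction is sound: by Proposition~\ref{pro:min} it suffices to show that every minimal ground-state space of a $2$-local FF qubit Hamiltonian is one-dimensional, and a one-dimensional MPI is automatically the unique ground state of $\vec{L}^{\perp}_2(\mathcal{S})$. You have also reached for the right external input, the structure theory of $2$-local FF qubit Hamiltonians. But the proof is not finished: the entire burden lands on the indecomposable case, and there you offer a plan rather than an argument. You yourself flag the two obstacles --- that an indecomposable $\mathcal{S}$ might admit no qubit that can be pinned to a nonzero proper subspace, and that a pinning might fail to shrink any $2$-RS --- and then say they are ``to be overcome using the tree structure'' without overcoming them. (The second obstacle is in fact a non-issue for your own reduction: you only need a proper subspace $\mathcal{S}'\subsetneq\mathcal{S}$ that is itself a $2$-local FF ground space, since then $\mathcal{S}'$ is its own MPI and $\vec{L}_2(\mathcal{S}')\neq\vec{L}_2(\mathcal{S})$ already because their MPIs differ. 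The first obstacle, however, is exactly the hard content you have not supplied.)

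The paper closes precisely this gap by invoking the structure theorem of~\cite{CCD10,JWZ10} in a sharper form than the bare tree-tensor-network description: every $2$-local FF qubit Hamiltonian has a ground state $\ket{\psi''}$ that is a product of single- and two-qubit states. Such a $\ket{\psi''}$ has a one-dimensional MPI (each single-qubit factor is pinned by the $2$-RSs of the pairs containing it, and each entangled two-qubit factor is pinned by the $2$-RS of its own pair), so $\Span\{\ket{\psi''}\}$ is itself a $2$-local FF ground space. Applying this to $\vec{L}^{\perp}_2(\mathcal{S})$ for an atom with MPI $\mathcal{S}$ gives $\vec{L}_2(\ket{\psi''})\leq\vec{L}_2(\mathcal{S})$ with distinct MPIs whenever $\dim\mathcal{S}\geq 2$, contradicting atomicity --- no induction, no case split on decomposability, and no pinning construction. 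If you insist on your route, you would in effect have to re-prove this product-ground-state statement from the tree tensor network, which is strictly harder than citing it; I recommend you restructure the indecomposable case around that theorem.
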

\begin{proof}
First of all, for an $n$-qubit system, the ground-state space structure of $2$-local FF Hamiltonian
can be completely characterized, as is shown in~\cite{CCD10,JWZ10}. One important structure theorem
shows that any $2$-local FF Hamiltonian $H$ has a ground state which is a product of
single- or two-qubit states. We then denote this product state by $|\psi\rangle$.
Apparently $|\psi\rangle$ is a unique ground state of some $2$-local FF Hamiltonian.
Then according to Proposition~\ref{pro:pure}, $\vec{L}_2(\ket{\psi})$ is an atom in $\Theta_2$.

Now we further show that any $n$-qubit state $\ket{\psi'}$ which is
not a product of single- or two-qubit states has an image which is
not an atom in $\Theta_2$, under the map $\mathcal{\vec{L}}_2$.
Because $\ket{\psi'}$ is not a product of single- or two-qubit
states, then $\ket{\psi'}$ as a ground state of any $2$-local FF
Hamiltonian must be degenerate with another state $\ket{\psi''}$
which is a product of single- or two-qubit state. Therefore, one
must have $\vec{L}_2(\ket{\psi''})<\vec{L}_2(\ket{\psi'})$. Hence
$\vec{L}_2(\ket{\psi'})$ is not an atom in $\Theta_2$.
\end{proof}

However, this theorem is in general not true. For instance, Kitaev's
toric code on a torus has a four-fold degenerate ground-state
space~\cite{KSV02}, while the corresponding Hamiltonian is $4$-local
and FF. The image of the ground-state space under the map
$\mathcal{\vec{L}}_4$ is an atom in $\Theta_4$, where any subspace
of the ground-state space has exactly the same image under the map
$\mathcal{\vec{L}}_4$. It then remains a challenge to understand
further the possible pre-image structures for atoms in $\Theta_k$.

Finally, we discuss the MPI for join irreducibles of $\Theta_k$.
Intuitively, this MPI is in a sense also `irreducible'. We make this
more concrete by introducing the following definition.
\begin{definition}
The ground-state space of a $k$-local FF Hamiltonian is called irreducible if it cannot be written as a sum of two proper subspaces that are ground-state spaces of some other $k$-local FF Hamiltonians.
\end{definition}
Then the MPI of join irreducibles of $\Theta_k$ can be linked to irreducible ground spaces of $k$-local FF Hamiltonians.
\begin{proposition}
\label{pro:irrg} The MPI of a join irreducible of $\Theta_k$ under
the map $\mathcal{\vec{L}}_k$ is an irreducible ground-state space
of $k$-local FF Hamiltonians.
\end{proposition}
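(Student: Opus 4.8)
The plan is to run the same duality argument used for the other propositions in this section, but now converting a \emph{sum} of subspaces into a \emph{join} in $\Theta_k$. Let $\vec{L}_k\in\Theta_k$ be join irreducible and let $\mathcal{S}$ be its MPI, so $\vec{L}_k(\mathcal{S})=\vec{L}_k$. Since the MPI of any element of $\Theta_k$ is the ground-state space of the corresponding $k$-local FF Hamiltonian $\vec{L}_k^{\perp}$ (and, by the iff-proposition above, this forces the MPI of $\vec{L}_k(\mathcal{S})$ to equal $\mathcal{S}$), it only remains to show that $\mathcal{S}$ is irreducible. I would argue by contradiction: suppose $\mathcal{S}=\mathcal{S}_1+\mathcal{S}_2$ with $\mathcal{S}_1,\mathcal{S}_2$ proper subspaces of $\mathcal{S}$, each being the ground-state space of some other $k$-local FF Hamiltonian.

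The crucial step is the identity
\begin{equation}
\vec{L}_k(\mathcal{S}_1+\mathcal{S}_2)=\vec{L}_k(\mathcal{S}_1)\vee\vec{L}_k(\mathcal{S}_2).
\end{equation}
The inclusion ``$\geq$'' is immediate from monotonicity of $\mathcal{\vec{L}}_k$: since $\mathcal{S}_i\subseteq\mathcal{S}$ we have $F_{\mathcal{S}_i}\subseteq F_{\mathcal{S}}$, hence $\vec{L}_k(\mathcal{S}_i)\leq\vec{L}_k(\mathcal{S})$, and therefore their join is $\leq\vec{L}_k(\mathcal{S})$ as well. For ``$\leq$'' I would reuse the explicit description of $\range(\gamma_j)$ from the proof of the first proposition of Sec.~\ref{sec:kRS}: given any $\rho$ with $\range(\rho)\subseteq\mathcal{S}$, write $\rho=\sum_{\alpha}p_{\alpha}\ket{\psi_{\alpha}}\bra{\psi_{\alpha}}$ and split each eigenvector as $\ket{\psi_{\alpha}}=\ket{\phi_{\alpha}}+\ket{\chi_{\alpha}}$ with $\ket{\phi_{\alpha}}\in\mathcal{S}_1$, $\ket{\chi_{\alpha}}\in\mathcal{S}_2$; then for any $(n-k)$-particle basis vector $\ket{\phi_{\beta}}$ of the traced-out subsystem, $\iprod{\phi_{\beta}}{\psi_{\alpha}}=\iprod{\phi_{\beta}}{\phi_{\alpha}}+\iprod{\phi_{\beta}}{\chi_{\alpha}}$ lies in $\range(\gamma_{M_1,j})+\range(\gamma_{M_2,j})$, where $\gamma_{M_i,j}$ is the $j$-th $k$-RDM of the maximally mixed state of $\mathcal{S}_i$. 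Hence $\range(\gamma_j(\rho))\subseteq\range(\gamma_{M_1,j})+\range(\gamma_{M_2,j})$ for every $\rho\in F_{\mathcal{S}}$ and every $j$, which gives $\vec{L}_k(\mathcal{S})\leq\vec{L}_k(\mathcal{S}_1)\vee\vec{L}_k(\mathcal{S}_2)$ and thus equality.

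To close the argument I would invoke the hypothesis that $\mathcal{S}_1$ and $\mathcal{S}_2$ are ground-state spaces of $k$-local FF Hamiltonians: by the iff-proposition the MPI of $\vec{L}_k(\mathcal{S}_i)$ equals $\mathcal{S}_i$, a proper subspace of $\mathcal{S}$, so $\vec{L}_k(\mathcal{S}_i)\neq\vec{L}_k$ for $i=1,2$. But then $\vec{L}_k=\vec{L}_k(\mathcal{S})=\vec{L}_k(\mathcal{S}_1)\vee\vec{L}_k(\mathcal{S}_2)$ expresses $\vec{L}_k$ as a join of two elements each distinct from $\vec{L}_k$, contradicting join irreducibility; hence no such decomposition of $\mathcal{S}$ exists and $\mathcal{S}$ is irreducible. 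I expect the additivity identity above to be the only substantive point --- everything else is a routine application of the duality --- and the requirement built into the definition of ``irreducible'' that $\mathcal{S}_1,\mathcal{S}_2$ themselves be FF ground-state spaces is exactly what is needed to conclude $\vec{L}_k(\mathcal{S}_i)\neq\vec{L}_k$ (without it one only gets $\mathcal{S}_i\subseteq$ MPI of $\vec{L}_k(\mathcal{S}_i)$, which is not enough).
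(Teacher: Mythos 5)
Your proof is correct and follows essentially the same route as the paper's: assume a decomposition $\mathcal{S}=\mathcal{S}_1+\mathcal{S}_2$ into proper subspaces that are FF ground-state spaces, show $\vec{L}_k(\mathcal{S})=\vec{L}_k(\mathcal{S}_1)\vee\vec{L}_k(\mathcal{S}_2)$ with both joinands distinct from $\vec{L}_k(\mathcal{S})$, and contradict join irreducibility. The only differences are cosmetic: where the paper obtains the key equality by citing Lemma~\ref{lm:pre} and Proposition~\ref{pro:sub}, you re-derive it directly by splitting eigenvectors across $\mathcal{S}_1$ and $\mathcal{S}_2$, and you make explicit the use of the FF hypothesis on the $\mathcal{S}_i$ to conclude $\vec{L}_k(\mathcal{S}_i)\neq\vec{L}_k$, a strictness step the paper asserts without comment.
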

\begin{proof}
If the MPI $\mathcal{S}$ of a join irreducible of $\Theta_k$ under
the map $\mathcal{\vec{L}}_k$ is not an irreducible ground-state
space, then there exist two proper subspaces $\mathcal{S}'$ and
$\mathcal{S}''$ of $\mathcal{S}$ which are ground-state spaces of
some other $k$-local FF Hamiltonians, and
$\mathcal{S}'+\mathcal{S}''=\mathcal{S}$. Then
$\vec{L}_k(\mathcal{S}')<\vec{L}_k(\mathcal{S})$ and
$\vec{L}_k(\mathcal{S}'')<\vec{L}_k(\mathcal{S})$. Therefore
$\vec{L}_k(\mathcal{S}')\vee\vec{L}_k(\mathcal{S}'')\leq\vec{L}_k(\mathcal{S})$.
However, one must have
$\vec{L}_k(\mathcal{S}')\vee\vec{L}_k(\mathcal{S}'')=\vec{L}_k(\mathcal{S})$,
as the MPI of $\vec{L}_k(\mathcal{S}')\vee\vec{L}_k(\mathcal{S}'')$
equals $\mathcal{S}$, due to Lemma~\ref{lm:pre} and
Proposition~\ref{pro:sub}. This contradicts the assumption that
$\vec{L}_k(\mathcal{S})$ is a join irreducible.
\end{proof}
As an example, $\vec{L}_2$ in Example~\ref{eg:subsemi} is a join
irreducible in $\Theta_k$, and the MPI $\mathcal{S}$ is an
irreducible ground-state space of $2$-local FF Hamiltonians. It
remains open whether the reverse is also true. That is, whether the
image of an irreducible ground-state space of some $k$-local FF
Hamiltonian under the map $\mathcal{\vec{L}}_k$ is a join
irreducible of $\Theta_k$.

\section{Conclusion and Discussion}
\label{sec:con}

We have studied the ground-state space properties of
frustration-free Hamiltonians, from a new angle of reduced spaces.
For an $n$-particle system, we discuss the mathematical structure of
$\Theta_k$, the set of all the $k$-RSs. We have provided three
different but closely related perspectives. The first one is the
most straightforward, which is based on a binary operation called
sum, under which $\Theta_k$ is closed and forms an idempotent,
commutative semigroup without zero element. This reveals the most
basic structure of $\Theta_k$. The sum operation is just the usual
sum of vector spaces. However, because the lack of zero element,
which cannot be consistently defined on $\Theta_k$, $\Theta_k$ is
not closed under the usual intersection of vector spaces.

The second one is based on the language of order and semilattice,
which is the most natural as we are indeed studying some special
kind of subspaces of the Hilbert space, given that the set of all
the subspaces of the Hilbert space is a lattice, whose structure has
been widely studied in the field of quantum logic~\cite{Kal83}. It
turns out the idempotent, commutative semigroup is nothing but a
join-semilattice, where the binary operation join is just the sum
operation used in the semigroup characterization. Also, due to the
lack of zero element, the usual intersection operation of two vector
spaces, which in order-theoretic terms is called meet, cannot be
defined on $\Theta_k$. Therefore, unlike the set of all the
subspaces of the Hilbert space, which is a lattice with both join
and meet operations, $\Theta_k$ is not a lattice but only a
join-semilattice. This characterization of $\Theta_k$ from the
order-theoretic point of view allows us to investigate the structure
of $\Theta_k$ further by borrowing general theory of semilattices.
In particular, attentions are given to three most important kind of
elements in a join-semilattice: atoms, join primes and join
irreducibles. Join irreducibles are elements which cannot be written
as a join of other two elements. They are building blocks of
$\Theta_k$, as from the general theory of semilattices we know that
for $\Theta_k$, where the $n$-particle Hilbert space is finite,
every element can be expressed as a finite join of join
irreducibles. Atoms are the smallest elements in $\Theta_k$, which
is also the smallest join irreducibles. And we show there exists no
join prime in $\Theta_k$. As mentioned in Sec.\ref{sec:semilattice},
the characterizations of these special elements in $\Theta_k$ need
more exploration.

For the third one, we show that the semilattice can be interpreted
as an abstract convex structure. This characterization of $\Theta_k$
by a convex structure provides an analog of the characterization of
the set $\mathcal{D}_k$ of all the $k$-RDMs, which is a convex set.
The smallest nonzero elements in $\Theta_k$, namely atoms, are
analogs of extreme points in $\mathcal{D}_k$. However, contrary to
the points in $\mathcal{D}_k$ which can always be weighted sums of
extreme points, we show that the elements $\Theta_k$ may not be able
to be written as join of atoms, indicating a richer structure for
$\Theta_k$. Indeed, one needs to study also the join irreducibles of
$\Theta_k$, as all the elements in $\Theta_k$ can be written as a
finite join of join irreducibles.

Finally, we relate the structure of $\Theta_k$ to ground spaces of
$k$-local FF Hamiltonians: the subsemilattices of $\Theta_k$
correspond to ground spaces of $k$-local FF Hamiltonians; atoms of
$\Theta_k$ correspond to minimal ground-state spaces of $k$-local FF
Hamiltonians; and join irreducibles of $\Theta_k$ correspond to
irreducible ground-state spaces of $k$-local FF Hamiltonians. For
the case of an $n$-qubit system and $k=2$, we show that the MPI of
any atom of $\Theta_2$ corresponds to unique ground state of some
$2$-local FF Hamiltonian. However, the detailed structures of
minimal ground-state spaces and irreducible ground-state spaces
remain open.

Our study of $\Theta_k$ deepens the understanding of ground-state space properties for frustration-free Hamiltonians, from a new angle of reduced spaces. We believe this angle will open up new methods and directions in the study of ground-state space properties for frustration-free Hamiltonians.

\section{Acknowlegement}

We thank Prof. Mingsheng Ying for helpful discussions regarding lattice theory.
JC is supported by NSERC. ZJ acknowledges
support from ARO and NSF of China (Grant Nos. 60736011 and 60721061).
DWK is supported by NSERC Discovery Grant 400160,
NSERC Discovery Accelerator Supplement 400233 and Ontario Early Researcher Award 048142.
ZW acknowledges grant from the Centre for Quantum Technologies,
and the WBS grant under contract no. R-710-000-008-271 and
R-710-000-007-271.  BZ is supported by NSERC Discovery Grant 400500 and CIFAR.

\bibliographystyle{plain}
\bibliography{FFJPA}

\end{document}